\newcommand{\reals}{\mathbb{R}}
\newtheorem{stel}{Theorem}
\newtheorem{gevolg}{Corollary}
\newtheorem{lemma}{Lemma}
\theoremstyle{remark}
\begin{document}

\title{Global analysis of a predator-prey model with variable predator search rate}

\author{Ben Dalziel\footnote{Department of Integrative Biology and Department of Mathematics, Oregon State University, Benjamin.Dalziel@oregonstate.edu}, Enrique Thomann\footnote{Department of Mathematics, Oregon State University, homann@math.oregonstate.edu}, Jan Medlock\footnote{Department of Biomedical Sciences, Oregon State University, jan.medlock@oregonstate.edu}
, and Patrick De Leenheer\footnote{Department of Mathematics and Department of Integrative Biology, Oregon State University, Supported in part by NSF-DMS-1411853, deleenhp@math.oregonstate.edu}}

\date{}

\maketitle

\begin{abstract}
We consider a modified Rosenzweig-MacArthur predator-prey model, based on the premise that the search rate of predators is dependent on the prey 
density, rather than constant. A complete analysis of the global behavior of the model is presented, and shows that the model exhibits a dichotomy similar to the classical Rosenzweig-MacArthur model: either the coexistence steady state is  globally stable; or it is unstable, and then a unique, globally stable limit cycle exists. We discuss the similarities, but also important differences between our model and the Rosenweig-MacArthur model. The main differences are that: 1. The paradox of enrichment which always occurs in the Rosenzweig-MacArthur model, does not always occur here, and 2. Even when the 
paradox of enrichment occurs, predators can adapt by lowering their search rate, and effectively stabilize the system.
\end{abstract}



\section{Introduction}
Predator-prey interactions are among the most common in many ecological systems, and have received considerable attention. A prototype model that captures this is:

\begin{eqnarray*}
{\dot N}&=&rN\left(1-\frac{N}{K} \right)-f(N)P\\
{\dot P}&=&P\left(ef(N)-m \right) 
\end{eqnarray*}
Here, $N$ and $P$ denote the prey and predator density respectively, each expressed as numbers per unit area. In the absence of the predator, the prey is assumed to grow logistically, characterized by the positive parameters $r$ and $K$ representing the prey's maximal per capita growth rate, and carrying capacity respectively. The prey is consumed by the predator at a rate $f(N)$ per unit of predator density, and is assumed to depend on the prey density. The choice of the functional form for this rate function $f(N)$ -which is commonly known as the functional response- has important implications for the model behavior. In this paper, we propose a specific functional response that incorporates particular predator behavior, that will be explained below. The positive parameters $m$ and $e$ are the predator's mortality rate, and the conversion efficiency of prey into predator respectively. The parameter $e$ represents the number (or density) of predators obtained, per consumed prey (or density of prey). Since we shall assume throughout this paper that $e$ is a constant, we can scale it out 
by setting ${\bar N}=N$, ${\bar P}=P/e$ and ${\bar f}(N)=ef(N)$. In these transformed variables, and after dropping the bars, the model takes the following form:
\begin{eqnarray}
{\dot N}&=&rN\left(1-\frac{N}{K} \right)-f(N)P\label{s1}\\
{\dot P}&=&P\left(f(N)-m \right) \label{s2}
\end{eqnarray}

A common choice for $f(N)$ is the Holling type II functional response:
\begin{equation}\label{hollingII}
f_{II}(N)=\frac{sN}{shN+1},
\end{equation}
where $s$ and $h$ are positive constants representing the predator's search (or attack) rate, and the handling time respectively. The main qualitative features of this functional are that it is zero when $N$ equals zero, is increasing, saturates for large prey densities at $1/h$, and is concave (the second derivative of $f_{II}(N)$ is negative for all $N\geq 0$). The latter property implies that although the per-predator consumption rate  increases with prey density $N$, it is attenuated (i.e., it slows down) for larger values of $N$.

Using a Holling type II functional response in $(\ref{s1})-(\ref{s2})$  yields the Rosenzweig-MacArthur model \cite{rosenzweig}, which is one of the  benchmark predator-prey models in ecology. To understand the main motivation for this paper, it is useful to review   a mechanistic derivation of the Holling type II functional \cite{gyllenberg,max,smith} here: Consider a sufficiently long window of time $T$ during which an average predator catches 
$M$ prey in a landscape where the prey density is fixed at $N$. Then the functional response equals $M/T$:
$$
f_{II}(N)=\frac{M}{T}.
$$
Let $s$ be the search rate, i.e. the area searched by the average predator per unit of time. If $h$ is the time spent handling a single prey, then the average predator will spend 
a total amount of $T-Mh$ units of time searching for prey, during which the predator covers an area of $s(T-Mh)$. The average predator therefore catches a total of $Ns(T-mh)$ prey, and thus: 
$$
M=Ns(T-Mh).
$$
Dividing by $T$, and solving for $f_{II}(N)=M/T$ yields:
$$
f_{II}(N)=\frac{M}{T}=\frac{sN}{shN+1},
$$
which is Holling's type II functional response. Next we offer a conceptual framework to determine the value of $s$ in practice. Imagine that a predator moves in a plane at a constant velocity, meaning that its direction and magnitude $v$ are fixed. It seems plausible that field biologists can determine relatively accurate estimates of $v$. 
Suppose that at any fixed time, the predator is centered in a disk of radius $r$, and is capable to instantaneously search this disk for prey. Assume now that the predator moves for a period of time $T$ through the plane at the constant velocity $v$. The area searched by the predator in this time interval $[0,T]$ is equal to: $(2r)(vT)+\pi r^2$ (the sum of the area of a rectangle of length $vT$ and width $2r$, and the area of two half-disks with radius $r$). Thus, the search rate during this time interval equals:
$$
2rv+\frac{\pi r^2}{T}.
$$
Letting $T\to +\infty$, we obtain the predator's search rate:
\begin{equation}\label{speed}
s=2vr.
\end{equation}
Clearly, one can make different assumptions on how the predator moves (e.g. by allowing deterministic or random changes to the direction of movement and/or speed $v$; or assume diffusive movement etc), and these will lead to different values of $s$, related to the measurable characteristics of the predator's movement pattern. However, the expression obtained above is obviously a reasonable upper bound of the actual search rate, if we use the predator's largest possible speed, and largest possible radius it can search at any given time, two quantities that are likely well-documented for many predators.

{\bf The main purpose of this paper is to investigate the implications on the model behavior when 
the assumption that the search rate $s$ is constant, is relaxed.}  It seems plausible that when predators 
survey the environment they operate in, and sense the prey density, they may adapt their search rate based on the perceived prey density. 
We shall focus on a case where predators always increase their search rate when they perceive higher prey densities. Moreover, we assume that when the prey is absent, predators cease to search, and that the search rate is limited by a maximally achievable search rate, perhaps due to physiological limitations of the predators and/or physical constraints imposed by the environment. Specifically, we shall consider:
\begin{equation}\label{search}
s(N)=\frac{aN}{N+g},
\end{equation}
where $a$ and $g$ are positive constants. The parameter $a$ is the maximally achievable search rate, and 
$g$ is the half-saturation constant, which corresponds to the prey density at which the search rate is equal to half of the maximal value $a$. 
For all $N>0$, an increase in $g$ leads to a decrease in $s(N)$. In other words, increasing $g$ enables predators to decrease their search rate, a feature with important implications  
that will be discussed later.  Also note that when setting $g=0$ in $(\ref{search})$, we recover a constant search rate, as in the Rosenzweig-MacArthur model. 
We can also easily generalize the conceptual framework used earlier to derive the formula $(\ref{speed})$, to the current context where the search rate is dependent on the prey density $N$. It suffices to assume that the predator makes its speed $v$ dependent on $N$. Specifically, choosing $v(N)=v_{\max}N/(N+g)$ expresses that the predator 
interpolates its speed nonlinearly between zero (when $N=0$), $v_{\max}/2$ (when $N=g$), and $v_{\max}$ (when $N$ becomes infinitely large).  Replacing $v$ by $v(N)$ in $(\ref{speed})$, and $s$ by $s(N)$, yields 
$(\ref{search})$, when we set $a=2v_{\max}r$. This provides us once again with a reasonable way to parameterize the model, and let's us determine the value of $a$ based on 
predator characteristics ($v_{\max}$, $r$ and $g$) that should be readily available in the literature for many predator species.

Starting with Holling's type II functional response $(\ref{hollingII})$, but replacing $s$ by the expression  $s(N)$ in $(\ref{search})$, we obtain the following functional response:
\begin{equation}\label{functional}
f(N)=\frac{aN^2}{ahN^2+N+g}
\end{equation}
The main qualitative features this functional response shares with Holling's type II functional response, is that it is smooth, zero when $N$ equals zero, increasing,
and still saturates at $1/h$ for large prey densities. But the main qualitative difference is that its second derivative changes sign from positive to negative at 
a unique inflection point $N_0$. Consequently, this functional response is an example of what in the literature is known as a Holling type III functional response. 

In this paper, we perform a complete analysis of the global behavior of the model $(\ref{s1})-(\ref{s2})$ when the functional response $f(N)$ is given by $(\ref{functional})$, and compare it to the classical Rosenzweig-MacArthur model obtained when setting $f(N)=f_{II}(N)$ in $(\ref{s1})-(\ref{s2})$. For both models, the most interesting behavior occurs when one assumes that the systems have a steady state where both predator and prey coexist, and when $K>1/ah$ (respectively $K>1/sh$ for the Rosenzweig-MacArthur model).
In this case, both models exhibit a dichotomy: Either the coexistence steady state is globally stable, or it is unstable, and then the systems have a unique globally stable limit cycle. 
But there are fundamental differences between the two models as well.
Indeed, one of the main features of the Rosenzweig-MacArthur model is the so-called Paradox of Enrichment \cite{paradox}. This paradox comes from the observation  that for an increased carrying capacity $K$ for the prey (the 'enrichment' in the paradox), the model can be destabilized, changing its behavior from a system with a globally stable coexistence steady state, to a system with a globally stable limit cycle. This leads to possibly severe fluctuations in both predator and prey that may bring either species close to extinction. For the model presented here, an increase in the carrying capacity $K$ will at first also lead to a similar destabilization phenomenon in some, but interestingly, not in all cases. If the system is destabilized, predators can adaptively lower their search rate (by increasing the model parameter $g$), which in turn lets 
the system regain its pre-existing behavior characterized by the globally stable coexistence steady state. Our results offer an intriguing evolutionary mechanism that  may allow 
predator-prey systems  to cope with the dangers associated to enrichment in the prey's resource.

\section{Preliminaries}
We start by showing that the model $(\ref{s1})-(\ref{s2})$ with $(\ref{functional})$ is well-posed.
\begin{lemma} \label{well-posed}
All solutions of $(\ref{s1})-(\ref{s2})$ with functional response $(\ref{functional})$ remain in the non-negative orthant $\reals^2_+$ when initiated there, exist for all times $t>0$, and remain bounded.
\end{lemma}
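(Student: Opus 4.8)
The plan is to establish the three assertions --- invariance, global existence, and boundedness --- in that order, since the continuation argument for global existence will rely on the a priori bounds. First I would record that the functional response $(\ref{functional})$ is smooth on a neighborhood of $\reals^2_+$: its denominator $ahN^2+N+g$ is bounded below by $g>0$ for all $N\geq 0$, so $f$ is $C^\infty$ there, the right-hand side of $(\ref{s1})$--$(\ref{s2})$ is locally Lipschitz, and the standard existence--uniqueness theorem yields a unique local solution through each initial point in $\reals^2_+$.

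For invariance, I would exploit that both coordinate axes are themselves invariant. On the $N$-axis ($P=0$) equation $(\ref{s2})$ gives $\dot P = 0$, and on the $P$-axis ($N=0$) the fact that $f(0)=0$ makes equation $(\ref{s1})$ reduce to $\dot N = 0$. Hence each axis is forward invariant, and by uniqueness of solutions a trajectory started in the open first quadrant can never reach an axis in finite time; combining these facts shows that $\reals^2_+$ is positively invariant.

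The boundedness of the prey is immediate from the sign of the consumption term: in $\reals^2_+$ we have $f(N)P\geq 0$, so $\dot N \leq rN(1-N/K)$, and comparison with the logistic equation gives $N(t)\leq \max\{N(0),K\}$ for all $t\geq 0$. The crux of the proof, and the step I expect to be the main obstacle, is bounding the predator, since $\dot P$ can be positive whenever $f(N)>m$ and so no comparison is available for $P$ on its own. The key device is the linear combination $W=N+P$, for which the functional-response terms cancel exactly: $\dot W = rN(1-N/K)-mP$. Adding $mW$ to both sides yields $\dot W + mW = N\bigl(r+m-rN/K\bigr)$, whose right-hand side is a downward parabola in $N$ and is therefore bounded above by the constant $C=K(r+m)^2/(4r)$. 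A Gronwall / comparison argument then gives $W(t)\leq \max\{W(0),C/m\}$, and since $N,P\geq 0$ in the invariant orthant, boundedness of $W$ forces both $N$ and $P$ to remain bounded.

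Finally, global existence follows from boundedness by a continuation argument: every solution remains in a fixed compact subset of $\reals^2_+$ throughout its maximal interval of existence, and the vector field is smooth on that set, so the solution cannot escape to infinity in finite time and hence extends to all $t>0$.
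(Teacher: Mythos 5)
Your proof is correct, and its engine is the same one the paper uses: the linear combination $N+P$, along which the functional-response terms cancel and the derivative $rN(1-N/K)-mP$ becomes negative once $N+P$ is large. The difference lies in the packaging. The paper turns this computation into a geometric statement: for all sufficiently large $B$, the triangles $T_B=\{(N,P)\in\reals^2_+ \mid N+P\le B\}$ are forward invariant, because the vector field points inward on the face $N+P=B$ (the same quadratic-in-$N$ estimate you use, there phrased as the condition $(r-m)^2<4rmB/K$), and forward invariance of a nested family of compact sets simultaneously delivers positivity, boundedness, and global existence in one stroke. You instead run an integrating-factor/Gronwall argument on $W=N+P$, obtaining $\dot W+mW\le C$ with $C=K(r+m)^2/(4r)$ and hence $W(t)\le\max\{W(0),C/m\}$. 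Your route is slightly longer (it needs the separate axis-invariance step via uniqueness of solutions, and an explicit continuation argument), but it buys something the paper's version does not state: the uniform ultimate bound $\limsup_{t\to\infty}W(t)\le C/m$, i.e., an absorbing set independent of the initial data, which is a strictly stronger dissipativity conclusion than boundedness of each individual trajectory. Your treatment of the axes (invariance of each axis plus backward uniqueness, so interior trajectories cannot reach the boundary) also fills in the details that the paper dismisses as ``straightforward''.
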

\begin{proof}
For all $B>0$, consider the triangular regions
$$
T_B=\{(N,P)\in \reals^2_+  \, | \, N+P\leq B\}.
$$
We claim that $T_B$ is forward invariant for all sufficiently large $B$. This see this, we check that the vector field of the system is inward-pointing on the boundary of each such $T_B$. 
For the boundary parts where $N=0$ or where $P=0$, this is straightforward, where in fact it holds for all $B>0$. To see why it holds when $N+P=B$, note that then
$$
{\dot N}+{\dot P}=rN\left(1-\frac{N}{K} \right)-m(B-N)=-\frac{r}{K}N^2+(r-m)N-mB,
$$
which is negative for all $N\geq 0$, provided that:
$$
(r-m)^2<4\frac{r}{K}mB.
$$
Thus, the vector field is inward-pointing on this part of the boundary of $T_B$, provided that $B$ is sufficiently large.
\end{proof}

{\bf Prey-nullcline}:
For all $N>0$, we define the prey-nulline 
\begin{equation}
P=h(N),
\end{equation}
where
\begin{equation}
 h(N):=\frac{rN\left(1-\frac{N}{K} \right)}{f(N)}=\frac{r}{a} \left(\left(1-\frac{N}{K} \right)(ahN+1)+g\left(\frac{1}{N}-\frac{1}{K} \right) \right).
\end{equation}
It is clear that for fixed $K>0$, the function is smooth for all $N>0$, positive for $0<N<K$, zero at $N=K$, and negative for $N>K$, and that the graph of $h(N)$ has a vertical 
asymptote at $N=0$. We need to understand better the 
qualitative properties of the graph of $h(N)$ on the interval $(0,K]$, which is why we calculate the derivatives of $h(N)$:
\begin{eqnarray}
h'(N)&=&\frac{r}{a}\left(-2\frac{ah}{K}N+ ah-\frac{1}{K} -\frac{g}{N^2} \right)\label{h1}\\
h''(N)&=&\frac{r}{a}\left( -2\frac{ah}{K}+2\frac{g}{N^3}\right)\label{h2}\\
h'''(N)&=&-6\frac{rg}{aN^4}<0,\textrm{ for all }N>0. \label{h3}
\end{eqnarray}
{\bf Case 1}: $K-1/ah\leq 0$. In this case it is clear that $h'<0$ for all $N>0$, and thus  $h(N)$ is decreasing on $(0,K]$.

\noindent
{\bf Case 2}: $K-1/ah>0$. In this case there are two possibilities: Either $h'(N)<0$ for all $N>0$, and then $h(N)$ is decreasing on $(0,K]$ as in {\bf Case 1}. 
Or, there exist $N_{\min}$ and $N_{\max}$ in the interval $(0,(K-1/ah)/2)$, with $N_{\min}\leq N_{\max}$ such that: 
\begin{equation}\label{hprime-signs}
h'(N)=\begin{cases}
<0,\textrm{ if }0<N<N_{\min}\textrm{ and if } N_{\max}<N\leq K\\
0,\textrm{ if } N=N_{\min}\textrm{ and if }N=N_{\max}\\
>0,\textrm{ if } N_{\min} < N < N_{\max}
\end{cases}
\end{equation}
When $N_{\min}=N_{\max}$, then $h(N)$ is decreasing on $(0,K]$. But when $N_{\min}<N_{\max}$, the function $h(N)$ has a unique local minimum at $N=N_{\min}$, and a unique local maximum at $N=N_{\max}$ in the interval $(0,K]$. Furthermore, $h(N)$ is decreasing on $(0,N_{\min})$ and on $(N_{\max},K)$, but increasing on $(N_{\min},N_{\max})$, and  
has a unique inflection point at $N=N_i$, where:
\begin{equation}\label{infl}
N_{\min}<N_i<N_{\max},\textrm{ and }N_i^3=\frac{Kg}{ah},
\end{equation}
and where $h''(N)$ switches from positive to negative when crossing $N=N_i$.

In summary, the function $h(N)$ is either decreasing on $(0,K]$, or it is not. In the latter case, $h(N)$ has exactly two critical points for $N$ in $(0,K]$ (one is a local minimum, the other a local maximum for $h$), and a unique inflection point located between the two critical points. Both possibilities of {\bf Case 2} are illustrated in Figure $\ref{nullcline}$.

\begin{figure}
    \centering
    \subfloat[Decreasing $h(N)$]{{\includegraphics[width=6.5cm]{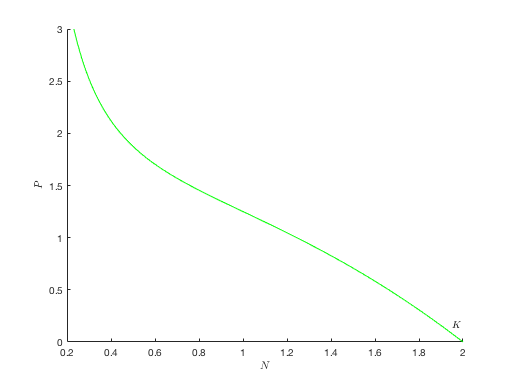} }}
    \qquad
    \subfloat[Non-monotone $h(N)$]{{\includegraphics[width=6.5cm]{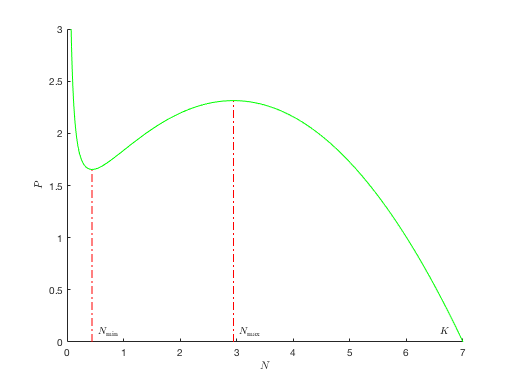} }}
    \caption{Graph of the prey nullcline $P=h(N)$
    of system $(\ref{s1})-(\ref{s2})$ with $(\ref{functional})$. Parameter values: $r=a=h=1$, and (a) $K=2$, and $g=1/2$, (b) $K=7$, and $g=1/7$.}
    \label{nullcline}
\end{figure}





{\bf Predator nullcline}: The predator nullcline is determined by the equation $f(N)=m$, where $f(N)$ is given by $(\ref{functional})$. Since $f$ is increasing, the equation $f(N)=m$ has 
a unique positive solution at $N=N^*$ if and only if $m<1/h$. For convenience we define $N^*=+\infty$ if $m\geq 1/h$. Note that when $N^*< +\infty$, the predator nullcline is given by the vertical line $N=N^*$ in the phase plane $\reals^2_+$ of the system.

From the qualitative behavior of the prey and predator nullclines follows that the model has a unique coexistence steady state $E^*=(N^*,P^*)$ with $P^*=h(N^*)$, if and only if: 
\begin{equation}\label{coexist}
N^*<K.
\end{equation}

We shall first consider the less interesting case when model $(\ref{s1})-(\ref{s2})$ with $(\ref{functional})$ has no coexistence steady state, or equivalently, when $N^*\geq K$. 
The proof is omitted since it is easily obtained using standard phase plane arguments.
\begin{stel}\label{no-coexist}
Assume that $N^*\geq K$. Then system $(\ref{s1})-(\ref{s2})$ with $(\ref{functional})$ has two steady states $E_0=(0,0)$ and $E_1=(K,0)$. All solutions with initial condition 
$(N_0,P_0)$ such that $N_0=0$, converge to $E_0$ at $t\to +\infty$. All solutions with initial condition $(N_0,P_0)$ such that $N_0>0$, converge to $E_1$ as $t\to +\infty$. In particular, the predator always goes extinct.
\end{stel}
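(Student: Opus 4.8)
The plan is to treat this as a planar phase-plane problem and invoke the Poincaré–Bendixson theorem, after first pinning down the equilibria and recording an a priori bound on the prey. Setting $\dot P=0$ forces either $P=0$ or $f(N)=m$; the equation $f(N)=m$ has the solution $N=N^*$ only when $N^*<+\infty$, and even then $\dot N=0$ would require $P=h(N^*)=rN^*(1-N^*/K)/f(N^*)\le 0$ because $N^*\ge K$. Hence no coexistence equilibrium exists, and imposing $P=0$ in $\dot N=0$ leaves only $N\in\{0,K\}$. So the only steady states are $E_0$ and $E_1$, as claimed.

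Next I would exploit invariance of the axes. On $\{P=0\}$ the prey obeys the logistic law $\dot N=rN(1-N/K)$, so $N\to K$ when $N_0>0$ and $N\to 0$ when $N_0=0$; on $\{N=0\}$ one has $\dot P=-mP$, so $P\to 0$. This already settles every trajectory with $N_0=0$ (convergence to $E_0$) and every trajectory with $P_0=0,\ N_0>0$ (convergence to $E_1$). For interior data ($N_0>0$, $P_0>0$), invariance of $\{N=0\}$ and uniqueness keep $N(t)>0$ for all $t$, while Lemma \ref{well-posed} keeps the orbit bounded; its $\omega$-limit set is therefore nonempty, compact, connected, and invariant.

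I would then rule out the remaining Poincaré–Bendixson alternatives. Since the only equilibria lie on $\{P=0\}$, any closed orbit contained in the open quadrant $\{P>0\}$ would enclose no equilibrium, contradicting the index theorem; so there are no periodic orbits. The Jacobian at $E_0$ is $\diag(r,-m)$, making $E_0$ a hyperbolic saddle whose forward-converging orbits lie exactly on its stable manifold $\{N=0\}$; consequently no interior orbit can limit on $E_0$, which excludes both a heteroclinic connection into $E_0$ through $\{P>0\}$ and any resulting separatrix cycle. The Poincaré–Bendixson trichotomy then forces the $\omega$-limit set to be a single equilibrium, and it cannot be $E_0$ since an interior orbit never meets $\{N=0\}$; hence it is $E_1$, giving convergence to $E_1$ for all interior data and so the predator's extinction.

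The main obstacle is the borderline case $N^*=K$ (finite), where $f(K)=m$ makes the second eigenvalue of the Jacobian at $E_1$ vanish, so $E_1$ is non-hyperbolic and linearization at $E_1$ is inconclusive. The Poincaré–Bendixson route above is chosen precisely to sidestep this: it uses only the saddle structure of $E_0$, which persists for every $N^*\ge K$, and the absence of enclosed equilibria, so it covers the degenerate case uniformly. As an independent check when $N^*>K$ (including $N^*=+\infty$), the comparison $\dot N\le rN(1-N/K)$ yields $\limsup_{t\to\infty}N(t)\le K$, so eventually $N(t)<c$ for some $c$ with $K<c<N^*$; then $f(N)\le f(c)<m$ gives $\dot P\le (f(c)-m)P$ and hence $P(t)\to 0$ exponentially, after which $\dot N=rN(1-N/K)-f(N)P$ is asymptotically logistic and a standard comparison yields $N(t)\to K$. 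I would present the Poincaré–Bendixson argument as the proof of record and relegate this comparison estimate to a remark, since the former handles the non-hyperbolic case without extra machinery.
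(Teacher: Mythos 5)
Your equilibrium classification, the boundary dynamics, the exclusion of periodic orbits by the index theorem, and the exclusion of $E_0$ from interior $\omega$-limit sets are all sound, and the paper itself omits this proof (calling it ``standard phase plane arguments''), so the only question is whether your argument is complete. It is not quite: the Poincar\'e--Bendixson trichotomy allows a third possibility that you never exclude, namely an $\omega$-limit set that is a graphic whose only equilibrium is $E_1$ --- that is, $E_1$ together with an orbit in the open quadrant that is homoclinic to $E_1$. Your index argument does not touch this case (a homoclinic loop carries the equilibrium $E_1$ on itself, so it is not forced a priori to enclose one), and your separatrix-cycle exclusion is explicitly only about connections into $E_0$. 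For $N^*>K$ the repair is one line: $E_1$ is then a hyperbolic stable node, and an asymptotically stable equilibrium cannot be the $\alpha$-limit of any nonconstant orbit (a strict local Lyapunov function would have to have equal limits as $t\to\pm\infty$), so nothing emanates from $E_1$ and no homoclinic loop exists.

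The repair is \emph{not} one line in the borderline case $N^*=K$, which is precisely the case you claim to ``sidestep.'' There $J(E_1)$ has eigenvalues $-r$ and $0$, and whether an orbit can emanate from the non-hyperbolic $E_1$ into the open quadrant is a genuine question; your assertion that the argument ``uses only the saddle structure of $E_0$'' and hence covers the degenerate case uniformly is therefore unjustified. Some local analysis at $E_1$ is unavoidable, e.g.\ a center-manifold computation: with $u=N-K$, $v=P$, the center manifold is tangent to $(-m,r)$ and the reduced dynamics is ${\dot v}=-\bigl(mf'(K)/r\bigr)v^2+O(v^3)$ with $f'(K)>0$, so $E_1$ is asymptotically stable relative to $\{P\geq 0\}$ and cannot be the $\alpha$-limit of any interior orbit. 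Alternatively, you can bypass $E_1$ entirely when $N^*=K$ by a direct argument that also finishes the convergence claim: on $\{N\geq K,\ P>0\}$ one has ${\dot P}>0$ and hence ${\dot N}\leq rN(1-N/K)-mP_0<0$ at a rate bounded away from zero, so every interior orbit enters the forward-invariant region $\{0<N\leq K,\ P>0\}$, where ${\dot P}=P\bigl(f(N)-f(K)\bigr)\leq 0$; then $P(t)$ decreases to some $P_\infty\geq 0$, and $P_\infty>0$ is impossible because the $\omega$-limit set would be an invariant subset of the line $P=P_\infty$, forcing $N\equiv N^*=K$ on it, while $(K,P_\infty)$ is not an equilibrium; with $P_\infty=0$ and your exclusion of $E_0$, the $\omega$-limit set must be $\{E_1\}$. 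Either route closes the gap; as written, your ``proof of record'' does not.
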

This result is not surprising: It says that if the predator's break-even density $N^*$ equals or exceeds the prey's carrying capacity, then the predator is doomed.

Next, we turn to a more interesting scenario where the model has a unique coexistence steady state $E^*=(N^*,P^*)$, but where the prey-nullcline is assumed to decrease on $(0,K]$:
\begin{stel}\label{decreasing-nullcline}
Assume that $N^*<K$, and that $h(N)$ is decreasing for $N$ in $(0,K]$. 
Then system $(\ref{s1})-(\ref{s2})$ with $(\ref{functional})$ has three steady states $E_0=(0,0)$, $E_1=(K,0)$ and a coexistence steady state $E^*=(N^*,P^*)$. All solutions with initial condition 
$(N_0,P_0)$ such that $N_0=0$, converge to $E_0$ at $t\to +\infty$. All solutions with initial condition $(N_0,P_0)$ such that $N_0>0$ and $P_0=0$, converge to $E_1$ as $t\to +\infty$. All solutions with initial condition $(N_0,P_0)$ such that $N_0>0$ and $P_0>0$, converge to $E^*$ as $t\to +\infty$.
\end{stel}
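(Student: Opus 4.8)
The plan is to treat the three families of initial conditions separately, the substantive claim being the convergence of interior trajectories to $E^*$; the boundary cases are essentially immediate. The axes $\{N=0\}$ and $\{P=0\}$ are invariant, since $\dot N=0$ when $N=0$ and $\dot P=0$ when $P=0$. On $\{N=0\}$ we have $\dot P=-mP$ (using $f(0)=0$), so $P\to 0$ and the solution tends to $E_0$. On $\{P=0\}$ with $N_0>0$ the prey equation reduces to the logistic equation $\dot N=rN(1-N/K)$, whose positive solutions all tend to $K$, giving convergence to $E_1$. By Lemma \ref{well-posed} the open first quadrant $\{N>0,\ P>0\}$ is forward invariant and all solutions are bounded, so it remains to handle trajectories with $N_0>0$ and $P_0>0$.

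For these I would use a Lyapunov function of Volterra type,
\[
V(N,P)=\int_{N^*}^{N}\frac{f(\xi)-m}{f(\xi)}\,d\xi+\int_{P^*}^{P}\frac{\eta-P^*}{\eta}\,d\eta .
\]
Using $f(N^*)=m$ and the monotonicity of $f$, the first integrand $1-m/f(\xi)$ vanishes at $N^*$ and has the sign of $\xi-N^*$, so the first integral is nonnegative and vanishes only at $N=N^*$; the second integral equals $(P-P^*)-P^*\ln(P/P^*)\ge 0$ with equality only at $P=P^*$. Hence $V$ is positive definite relative to $E^*$. Moreover $V\to+\infty$ as $(N,P)$ approaches either axis or infinity: near $N=0$ because $f(\xi)\sim a\xi^2/g$ makes $m/f(\xi)$ nonintegrable, near $P=0$ because of the logarithmic term, and at infinity because $1-m/f(\xi)\to 1-mh>0$. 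Consequently every sublevel set $\{V\le c\}$ is a compact subset of the open quadrant.

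The crux is the derivative of $V$ along solutions. Substituting the two differential equations and cancelling the terms proportional to $P$, the expression collapses to the product form
\[
\dot V=\bigl(f(N)-m\bigr)\bigl(h(N)-h(N^*)\bigr),
\]
where $h$ is exactly the prey-nullcline function introduced earlier (recall $P^*=h(N^*)$). This is the step I expect to carry the argument: one reads off that $f(N)-m$ has the sign of $N-N^*$, since $f$ is increasing with $f(N^*)=m$, whereas $h(N)-h(N^*)$ has the opposite sign, since $h$ is decreasing on $(0,K]$ by hypothesis and satisfies $h(N)<0<h(N^*)$ for $N>K$. Therefore $\dot V\le 0$ throughout the open quadrant, with equality precisely on the line $N=N^*$.

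Finally I would apply LaSalle's invariance principle on the compact, positively invariant sublevel sets: each interior trajectory converges to the largest invariant subset of $\{N=N^*\}$. On that line, invariance forces $\dot N=rN^*(1-N^*/K)-mP=0$, i.e. $P=rN^*(1-N^*/K)/m=h(N^*)=P^*$, so the invariant subset is the single point $E^*$, which completes the proof. The main obstacle, beyond selecting the right Lyapunov function, is the bookkeeping that legitimizes LaSalle on the noncompact quadrant: verifying both the blow-up of $V$ at the boundary and the correct sign of $\dot V$ in the region $N>K$, where $h$ need not be monotone but is negative and hence still smaller than $h(N^*)$.
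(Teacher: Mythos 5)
Your proof is correct, but it takes a genuinely different route from the paper. The paper argues via planar topology: it linearizes to show $E_0$, $E_1$ are saddles, uses the Butler--McGehee lemma to exclude them from interior omega limit sets, and then rules out periodic orbits with the Bendixson--Dulac criterion, multiplying the vector field by $1/(Pf(N))$ so that the divergence of the scaled field is exactly $h'(N)\le 0$; Poincar\'e--Bendixson then forces convergence to $E^*$. You instead build the Volterra-type Lyapunov function and reduce everything to the identity $\dot V=(f(N)-m)\bigl(h(N)-h(N^*)\bigr)$, which I verified: writing $\dot N=f(N)(h(N)-P)$ makes the cancellation immediate. Your approach buys a more self-contained argument --- the properness of $V$ (blow-up at the axes and at infinity, using $m<1/h$, which indeed follows from $N^*<K<N^*(1/h)=+\infty$) automatically confines trajectories to compact subsets of the open quadrant, so you never need Butler--McGehee or Poincar\'e--Bendixson, and the convergence statement comes directly from LaSalle rather than through excluding cycles. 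What the paper's Dulac machinery buys in exchange is reusability: the same scaling idea, upgraded to $P^{\alpha-1}/f(N)$, carries the much harder Theorem \ref{main1}, where $h$ is non-monotone and your sign argument breaks down outright; the Lyapunov function has no evident analogue there. One small point to make explicit: your claim that $\dot V=0$ precisely on $\{N=N^*\}$ needs $h$ to be \emph{strictly} decreasing on $(0,K]$ (otherwise $h(N)=h(N^*)$ could hold on an interval and the LaSalle set would need separate analysis); this is harmless here because the paper's nullcline analysis shows that under the theorem's hypothesis $h'<0$ except possibly at a single point (the degenerate case $N_{\min}=N_i=N_{\max}$), so $h$ is strictly decreasing and your identification of the largest invariant set as $\{E^*\}$ stands.
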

\begin{proof}
The existence of the 3 steady states $E_0$, $E_1$ and $E^*$ is immediate. Linearization of the system yields the following Jacobian matrices at these steady states:
$$
J(E_0)=\begin{pmatrix}
r&0\\
0&-m
\end{pmatrix}, \, 
J(E_1)=\begin{pmatrix}
-r& -f(K)\\
0& f(K)-m
\end{pmatrix},\textrm{ and }
$$
$$
J(E^*)=\begin{pmatrix}
r(1-2N^*/K)-f'(N^*)P^*&-m\\
P^*f'(N^*)&0
\end{pmatrix}=\begin{pmatrix}\frac{m}{r}h'(N^*)&-m\\ P^*f'(N^*)&0 \end{pmatrix},
$$
From this follows that $E_0$ is a saddle, and so is $E_1$ because $N^*<K$ implies that $m=f(N^*)<f(K)$ ($f$ is increasing). Finally, if $h'(N^*)<0$, then 
$E^*$ is a stable because the trace of $J(E^*)$ is negative and its determinant is positive; if $h'(N^*)=0$, then $E^*$ is a center.

The statements regarding the convergence of solutions initiated on the boundary of $\reals^2_+$ are obvious because the boundary is invariant. Thus, to conclude the proof, it suffices to show that every solution initiated in the interior of $\reals^2_+$ converges to $E^*$. Let $\omega(N_0,P_0)$ be the omega limit set of such a solution. 
By Lemma $\ref{well-posed}$, $\omega(N_0,P_0)$ is non-empty. Standard arguments show that neither $E_0$, nor $E_1$ can belong to $\omega(N_0,P_0)$. We shall only prove that $E_0$ cannot belong to $\omega(N_0,P_0)$, because the argument is similar for $E_1$. By contradiction, suppose that $\omega(N_0,P_0)$ contains $E_0$. Then this limit set cannot be equal to the singleton $\{E_0\}$ because that would imply that $(N_0,P_0)$ belongs to the stable manifold of the saddle $E_0$. But this stable manifold coincides with the non-negative $P$-axis, which would contradict that $(N_0,P_0)$ belongs to the interior of $\reals^2_+$.  
Thus, $\omega(N_0,P_0)$ would then also have to contain a point distinct from $E_0$, and then the Butler-McGehee Lemma \cite{smith-waltman} implies that $\omega(N_0,P_0)$ must also contain a point of the stable manifold of $E_0$, distinct from $E_0$. Thus, some point on the positive $P$-axis would be contained in $\omega(N_0,P_0)$, and then forward and backward invariance of omega limit sets would imply that $\omega(N_0,P_0)$ contains the entire positive $P$-axis, contradicting compactness of $\omega(N_0,P_0)$.
To conclude the proof, we must show that $\omega(N_0,P_0)=\{E^*\}$. To do that we invoke the Poincar{\'e}-Bendixson Theorem. If we can establish that the system does not have a nontrivial periodic solution, then the proof will be completed. To rule out periodic solutions, we shall use the Bendixson-Dulac criterion. First we note that any periodic solution must necessarily 
be located in the open strip $S=\{(N,P)\in \reals^2_+\, | \, 0<N< K\textrm{ and } P>0\}$. Indeed, this follows from the fact that the non-negative $N$-axis, and the non-negative $P$-axis are forward invariant for the system, and because $dN/dt\leq -f(N)P<0$ when $N\geq K$ and $P>0$. Next, we multiply the vector field in $(\ref{s1})-(\ref{s2})$ by the function $1/(Pf(N))$ and take the divergence of the scaled vector field, to obtain:
\begin{equation}\label{divergence0}
h'(N)
\end{equation}
Recall that by assumption, $h(N)$ is decreasing for $N$ in $(0,K]$. 
If $h'(N)<0$ for all $N$ in $(0,K]$, then $(\ref{divergence0})$ is negative everywhere in $S$, which concludes the proof in this case. 
A very special case may occur where $h'(N)$ is not negative, but only non-positive for all $N\in (0,K]$. However, in this case $h'(N)$ will have a unique zero 
in this interval. This happens if and only if  
$N_{\min}=N_i=N_{\max}$ (see the earlier discussion of the prey nullcline), and then the zero of $h'(N)$ occurs at this very value. It is clear that in this case, $(\ref{divergence0})$ is still negative almost everywhere in $S$.
\end{proof}

Theorems $\ref{no-coexist}$ and $\ref{decreasing-nullcline}$ leave us with one last case to consider, namely when a unique coexistence steady state $E^*=(N^*,P^*)$ with $N^*<K$ exists, and when the graph of the prey-nullcline $h(N)$ is not decreasing for  $N$ in $(0,K)$, and instead has a local minimum and a local maximum with a unique inflection point sandwiched between the two critical points. The next Section will be devoted to the analysis of this case, but before we proceed, we 
discuss a key property regarding the location of the inflection point $N_i$ of the non-monotone function $h(N)$, and the unique inflection point $N_0$ of the function $f(N)$:
\begin{lemma}\label{inflection}
Assume that $h(N)$ is non-monotone  for $N$ in $(0,K]$, and let $N_i$ be the unique inflection point of $h(N)$ for $N$ in $(0,K]$, and 
$N_0$ be the unique inflection point of $f(N)$ for $N\geq 0$ respectively. Then
$$
N_0<N_i,\textrm{ and thus }
$$
\begin{equation}\label{fdouble}
f''(N)<0\textrm{ for all } N\geq N_i.
\end{equation}
\end{lemma}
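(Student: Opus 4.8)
The plan is to reduce the whole statement to the sign of an explicit cubic polynomial. First I would locate $N_0$ by computing $f''$ directly from $(\ref{functional})$. A routine differentiation gives $f'(N)=aN(N+2g)/(ahN^2+N+g)^2$, and differentiating once more yields
\[
f''(N)=\frac{2a\,\phi(N)}{(ahN^2+N+g)^3},\qquad \phi(N):=g^2-ahN^3-3ahgN^2.
\]
Since $\phi(0)=g^2>0$ and $\phi'(N)=-3ahN^2-6ahgN<0$ for all $N>0$, the function $\phi$ is strictly decreasing on $(0,\infty)$ and therefore has a unique positive zero, which is exactly $N_0$; moreover, because the denominator is positive, $f''$ inherits the sign of $\phi$ and hence switches from positive to negative at $N_0$. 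In particular, to prove $N_0<N_i$ it suffices to show $\phi(N_i)<0$, since $\phi$ is decreasing.

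Next I would exploit the defining relation $(\ref{infl})$, namely $ahN_i^3=Kg$, to collapse $\phi(N_i)$ into a transparent form. Substituting $ahN_i^3=Kg$ gives
\[
\phi(N_i)=g^2-Kg-3ahgN_i^2=g\bigl(g-K-3ahN_i^2\bigr),
\]
so, as $g>0$, the desired inequality $\phi(N_i)<0$ is equivalent to $g<K+3ahN_i^2$.

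To establish this last inequality I would use only the crude bound $N_i<K$, which holds because under the non-monotonicity hypothesis $N_i$ lies in $(N_{\min},N_{\max})\subset\bigl(0,(K-1/ah)/2\bigr)\subset(0,K)$. Rewriting $(\ref{infl})$ as $g=ahN_i^3/K=ahN_i^2\cdot(N_i/K)$ and using $N_i/K<1$ yields $g<ahN_i^2<K+3ahN_i^2$, which is precisely what is needed. Hence $\phi(N_i)<0$, and since $\phi$ is strictly decreasing with $\phi(N_0)=0$, we conclude $N_i>N_0$. Finally, $\phi<0$ on $(N_0,\infty)$ means $f''<0$ there, so any $N\ge N_i>N_0$ satisfies $f''(N)<0$, giving $(\ref{fdouble})$.

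I expect the only real obstacle to be recognizing that the relation $ahN_i^3=Kg$ is exactly the substitution that turns the a priori messy quantity $\phi(N_i)$ into the factored expression $g(g-K-3ahN_i^2)$; once that is seen, the comparison $g<K+3ahN_i^2$ follows from the single elementary fact $N_i<K$, so no finer information about $N_{\min}$, $N_{\max}$, or the degenerate case $N_{\min}=N_i=N_{\max}$ is required. The computation of $f''$ itself is routine bookkeeping rather than a genuine difficulty.
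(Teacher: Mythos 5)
Your proposal is correct, and although it shares the paper's overall reduction---compute $f''(N)=2a\,G(N)/(ahN^2+N+g)^3$ with $G(N)=-ahN^3-3gahN^2+g^2$ strictly decreasing and positive at $N=0$, so that $N_0$ is the unique positive zero of $G$ and the claim $N_0<N_i$ becomes $G(N_i)<0$---the way you close the argument is genuinely different from, and simpler than, the paper's. The paper first establishes $h'(N_i)>0$ (the inflection point of a non-monotone $h$ lies strictly between $N_{\min}$ and $N_{\max}$, where $h'>0$ by $(\ref{hprime-signs})$), converts this via $N_i^3=Kg/ah$ into the inequality $ahN_i^2>3g+N_i^2/K$ of $(\ref{key})$, and then estimates $G(N_i)=-ahN_i^3+g(g-3ahN_i^2)<-ahN_i^3+g(-8g-3N_i^2/K)<0$. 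You instead substitute $ahN_i^3=Kg$ directly into $G(N_i)$ to obtain the factorization $G(N_i)=g\,(g-K-3ahN_i^2)$, so that only $g<K+3ahN_i^2$ is needed, and this follows from $g=ahN_i^2\,(N_i/K)<ahN_i^2$, using nothing beyond $N_i<K$ (which holds since $N_i<N_{\max}<(K-1/ah)/2<K$). Both arguments are sound; yours is leaner because it consumes a strictly weaker consequence of the hypothesis---only the location of the inflection point $(Kg/ah)^{1/3}$ to the left of $K$, not the sign information $h'(N_i)>0$---and it even shows the lemma's conclusion persists whenever $g<ahK^2$, regardless of whether $h$ is monotone. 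What the paper's route buys is the sharper intermediate inequality $(\ref{key})$ tied to the nullcline geometry, but since nothing downstream in the paper reuses that inequality, your shorter argument loses nothing.
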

\begin{proof}
Let's first locate the inflection point of $f(N)$:
\begin{eqnarray}
f'(N)&=&\frac{aN(N+2g)}{(ahN^2+N+g)^2},\label{f1}\\
f''(N)&=&2a\frac{(N+g)(ahN^2+N+g)-(N^2+2gN)(2ahN+1)}{(ahN^2+N+g)^3} \nonumber\\
&=&2a\frac{-ahN^3-3gahN^2+g^2}{(ahN^2+N+g)^3}=:2a\frac{G(N)}{(ahN^2+N+g)^3}\label{f2}
\end{eqnarray}
Thus, since $G'(N)<0$ for all $N>0$, and $G(0)>0$, there exists a unique $N_0>0$ such that $G(N_0)=f''(N_0)=0$. Moreover, $f''(N)>0 \,(f''(N)< 0) $ for all $N<N_0 \,(N> N_0)$.

By assumption, $h(N)$ is non-monotone in $(0,K]$, hence there exist $N_{\min}$ and $N_{\max}$ in $(0,K]$ with $N_{\min}<N_{\max}$, such that $h'(N_{\min})=h'(N_{\max})=0$. 
Then there exists $N_i$ in $(N_{\min},N_{\max})$ such that $h''(N_i)=0$, and from $(\ref{hprime-signs})$ follows that $h'(N_i)>0$.
From $(\ref{h2})$ we see that $N_i$ is uniquely determined by:
\begin{equation}\label{inflec}
N_i^3=\frac{Kg}{ah},
\end{equation}
Then $h'(N_i)>0$, is equivalent to:
$$
-2\frac{ah}{K}N_i+ah-\frac{1}{K} -\frac{g}{N_i^2}>0\quad \Leftrightarrow \quad ahN_i^2>2\frac{ah}{K}N_i^3 + g + \frac{N_i^2}{K},
$$
and using $(\ref{inflec})$ this implies that:
\begin{equation}\label{key}
ahN_i^2>3g+\frac{N_i^2}{K}.
\end{equation}
Our goal is to show that $N_0<N_i$, or equivalently that $G(N_i)<0$. There holds that:
\begin{eqnarray*}
G(N_i)&=&-ahN_i^3-3gahN_i^2+g^2\\
&=&-ahN_i^3+g(g-3ahN_i^2)\\
&<&-ahN_i^3+g\left(-8g-\frac{3N_i^2}{K}\right),\textrm{ by } (\ref{key})\\
&<&0,
\end{eqnarray*}
which concludes the proof.
\end{proof}

\section{Dichotomy}
We now investigate the most interesting case, which occurs when the system has a unique coexistence steady state $E^*=(N^*,P^*)$, and when the 
prey nullcline $P=h(N)$ is not decreasing for $N$ in $(0,K]$. We have seen before that in this case the prey nullcline has two critical points for $N$ in $(0,K]$, namely a local minimum at $N=N_{\min}$ and a local maximum at $N=N_{\max}$, with a unique inflection point at $N=N_i$, where $N_{\min}<N_i<N_{\max}$.

Recall that the predator nullcline is given by the vertical line $N=N^*$. Depending on the location of $N^*$ in comparison to the critical points $N_{\min}$ and $N_{\max}$ of the prey nullcline $P=h(N)$, we will see that the system displays two distinct dynamical behaviors. When $0<N^*<N_{\min}$, or when $N_{\max}<N^*<K$, the system has a unique, globally stable steady state. This case will be discussed in the next Subsection. When $N_{\min}<N^*<N_{\max}$, the system displays a unique, globally stable limit cycle. This case will be shown in the second Subsection.

\subsection{Globally stable coexistence steady state $E^*$}
Our first main result is as follows:
\begin{stel}\label{main1}
Assume that $N^*<K$, and that $h(N)$ has a local minimum at $N=N_{\min}$ and a local maximum at $N_{\max}$, 
where $0<N_{\min}<N_{\max}<K$. Furthermore, assume that either 
\begin{equation}\label{more}
N_{\max}<N^*,
\end{equation}
or that
\begin{equation}\label{less}
N^*<N_{\min}.
\end{equation}
Then system $(\ref{s1})-(\ref{s2})$ with $(\ref{functional})$ has three steady states $E_0=(0,0)$, $E_1=(K,0)$ and a coexistence steady state $E^*=(N^*,P^*)$. All solutions with initial condition $(N_0,P_0)$ such that $N_0=0$, converge to $E_0$ at $t\to +\infty$. All solutions with initial condition $(N_0,P_0)$ such that $N_0>0$ and $P_0=0$, converge to $E_1$ as $t\to +\infty$. All solutions with initial condition $(N_0,P_0)$ such that $N_0>0$ and $P_0>0$, converge to $E^*$ as $t\to +\infty$.
\end{stel}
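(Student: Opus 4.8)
The plan is to follow the architecture of the proof of Theorem \ref{decreasing-nullcline} as far as possible, and to concentrate the genuinely new work on the single place where that proof used monotonicity of $h$: the exclusion of periodic orbits. First I would record the three steady states and linearize. The Jacobians at $E_0$ and $E_1$ are identical to those in Theorem \ref{decreasing-nullcline}, so $E_0$ is a saddle, and $E_1$ is a saddle because $N^*<K$ forces $f(K)>m$. At $E^*$ the determinant of $J(E^*)$ equals $mP^*f'(N^*)>0$, while the sign of its trace is the sign of $h'(N^*)$; the hypotheses $(\ref{more})$ or $(\ref{less})$ place $N^*$ on one of the two decreasing branches of $h$, so the sign pattern $(\ref{hprime-signs})$ gives $h'(N^*)<0$, and hence $E^*$ is locally asymptotically stable.

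Next I would dispose of the boundary and reduce to a statement about periodic orbits. Convergence of trajectories started on the axes is immediate from invariance of the axes. For an interior initial condition $(N_0,P_0)$, Lemma \ref{well-posed} makes the omega-limit set nonempty and compact, and the Butler-McGehee argument used in Theorem \ref{decreasing-nullcline} excludes $E_0$ and $E_1$ from it. By the Poincar{\'e}-Bendixson theorem it then suffices to prove that the system has no periodic orbit; every such orbit lies in the strip $S$ and must encircle $E^*$.

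The crux is ruling out these periodic orbits, and this is where the argument must diverge from Theorem \ref{decreasing-nullcline}. Multiplying the field by $1/(Pf(N))$ still produces a scaled divergence with the sign of $h'(N)$ (see $(\ref{divergence0})$), but $h'$ is now positive on $(N_{\min},N_{\max})$, so the global Bendixson-Dulac step fails. The plan is to show that, under $(\ref{more})$ (respectively $(\ref{less})$), any periodic orbit is confined to the half-strip $\{N>N_{\max}\}$ (respectively $\{N<N_{\min}\}$), on which $h'<0$ and the Dulac criterion does apply. Concretely I would try to build a forward-invariant region whose boundary consists of a segment of the predator nullcline $N=N^*$, an arc of the prey nullcline, and the line $N=K$ (where $\dot N<0$), using the signs $\dot P\gtrless 0$ for $N\gtrless N^*$ and $\dot N<0$ above the prey nullcline to verify that the field points inward. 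As an alternative route to the same conclusion I would use the Lyapunov function $V(N,P)=\int_{N^*}^N \frac{f(s)-f(N^*)}{f(s)}\,ds+\int_{P^*}^P \frac{s-P^*}{s}\,ds$, whose derivative along solutions simplifies to $\dot V=\bigl(f(N)-f(N^*)\bigr)\bigl(h(N)-h(N^*)\bigr)$, a function of $N$ alone that is nonpositive on the branch containing $N^*$.

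The main obstacle is precisely the non-monotonicity of $h$: because $\dot V$ is positive on a vertical strip surrounding the far critical point (near $N_{\min}$ when $(\ref{more})$ holds), and because orbits carrying large $P$ are not a priori trapped on the decreasing branch, I cannot immediately exclude a large-amplitude cycle that loops over the hump of the prey nullcline. Closing this gap is the heart of the proof, and I expect it to require the curvature information of Lemma \ref{inflection}, which guarantees $f''<0$ throughout $[N_i,\infty)$ and in particular at $N^*$: this concavity controls the shape of the prey nullcline past the inflection point, and should either force the candidate trapping region to close up before reaching the strip where $\dot V>0$, or, combined with the characteristic-exponent and uniqueness analysis developed for the limit-cycle case in the next subsection, show that a locally stable $E^*$ cannot coexist with any periodic orbit.
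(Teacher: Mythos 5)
Your reduction of the theorem to the nonexistence of periodic orbits is correct and coincides with the paper's: the linearizations at $E_0$, $E_1$ and $E^*$ (with $h'(N^*)<0$ under $(\ref{more})$ or $(\ref{less})$, so $E^*$ is locally asymptotically stable), the boundary dynamics, the Butler--McGehee exclusion of $E_0$ and $E_1$ from interior omega-limit sets, and the Poincar\'e--Bendixson step all match the proof of Theorem \ref{decreasing-nullcline}. The genuine gap is exactly the one you flag yourself: you have no mechanism that excludes a periodic orbit looping over the hump of the prey nullcline. The half-strip confinement plan is circular on this point: a Dulac argument on $\{N>N_{\max}\}$ only rules out cycles contained in that half-strip, so the cycles left to exclude are precisely those entering $\{N<N_{\max}\}$, which is what the confinement claim was supposed to deliver; and no trapping region built from nullcline arcs can give it, since the field is not inward along nullclines and every cycle encircling $E^*$ must cross both of them. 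Your Lyapunov function is correctly computed, $\dot V=\bigl(f(N)-f(N^*)\bigr)\bigl(h(N)-h(N^*)\bigr)$, and it does settle the special sub-case $h(N_{\min})\geq h(N^*)$ (then $\dot V\leq 0$ on all of $S$ and LaSalle applies), but in general $\dot V>0$ on the vertical strip around $N_{\min}$ where $h<h(N^*)$, and there the argument stops.

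The two rescues you hope for also do not work, so the gap is not a technicality. First, the principle that a locally stable $E^*$ cannot coexist with a periodic orbit is false for planar systems: a stable focus can be surrounded by an unstable cycle which is in turn surrounded by a stable one, and the paper's concluding section cites \cite{gail} for a closely related predator-prey model exhibiting exactly this configuration. To run such an argument you would need every cycle to be hyperbolic and orbitally stable, which is what the Kuang--Freedman condition $(\ref{unique})$ provides; but that condition fails in the present regime: under $(\ref{more})$, at $N=N_{\max}$ one has $h'=0$, $h''<0$ and $f-m<0$, so $mf'h'-f(f-m)h''<0$ (and symmetrically it fails at $N_{\min}$ under $(\ref{less})$). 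Hence the machinery of Theorem \ref{main2} cannot be imported. What the paper actually does --- and what is absent from your proposal --- is a single global Dulac function on the strip $S$ of the form $P^{\alpha-1}/f(N)$ with a constant exponent $\alpha$, whose scaled divergence is proportional to $h'(N)+\alpha\,(f(N)-m^*)/f(N)$, see $(\ref{divergence})$. The exponent is tuned to the hump: with $\alpha(N,m)=-f(N)h'(N)/(f(N)-m)$ and $m_{\max}$ the mortality whose break-even density is $N_{\max}$, one sets $\alpha=\alpha^*=\lim_{N\to N_{\max}}\alpha(N,m_{\max})>0$, and negativity of the divergence on all of $S$ follows from proving that $N\mapsto\alpha(N,m_{\max})$ is increasing (by factoring a quartic with a double zero at $N_{\max}$ into manifestly positive factors) and comparing its graph with that of $\alpha(N,m^*)$. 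This choice of exponent, which makes the Dulac inequality tight exactly at the hump, is the core idea of the paper's proof, and nothing in your proposal substitutes for it.
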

\begin{proof}

Before we start the proof, we must introduce some new notation. Recall that we defined $N^*$ as the solution to the equation $f(N)=m$.  However, with that notation, $m$ is asumed to be fixed, but later in this proof we shall need to treat $m$ as a variable parameter. Thus, we redefine $m$ as $m^*$. In other words, in this proof, $N^*$ will denote the unique solution to the equation $f(N)=m^*$.
Now, fixing all parameters $r,K,a,h$ and $g$, but treating $m$ as a variable parameter, the implicit function Theorem implies that $N^*(m)$ (the unique solution of $f(N)=m$) is a smooth map which is increasing on its domain  $(0,1/h)$. It is easy to show that $\lim_{m\to 0+} N^*(m)=0$, and $\lim_{m\to 1/h} N^*(m)= +\infty$, which implies that the map 
$N^*(m)$ is onto $(0,+\infty)$. 

We now turn to the proof of Theorem ${\ref{main1}}$. At first, we can apply the same reasoning as the proof of Theorem $\ref{decreasing-nullcline}$, up to the point where the Bendixson-Dulac criterion is invoked to rule out the existence of 
nontrivial periodic solutions in the open strip $S=\{(N,P)\,|\, 0<N<K, \textrm{ and }P>0\}$. Since $h(N)$ is no longer decreasing for $N$ in $(0,K]$, the scaling function $1/(Pf(N))$ for the vector field used in the proof of Theorem $\ref{decreasing-nullcline}$ is no longer appropriate. Instead, we shall consider a different scaling function here, namely 
$P^{\alpha -1}/f(N)$, where the constant $\alpha$ will be determined later. Scaling the vector field in $(\ref{s1})-(\ref{s2})$ (but where $m$ is replaced by $m^*$, for reasons discussed earlier) by this function, and then taking the divergence, yields:
\begin{equation}\label{divergence}
P^{\alpha} \left(h'(N)+{\alpha} \left(\frac{f(N)-m^*}{f(N)} \right) \right)
\end{equation}
Our goal is to show that there exists some $\alpha$ such that this divergence has fixed sign in the strip $S$.

For all $m$ in $(0,1/h)$, we define the following function for all $N$ in $(0,K]$ with $N\neq N^*(m)$:
$$
\alpha(N,m)=-\frac{f(N)h'(N)}{f(N)-m}
$$
\noindent
{\bf Case 1}: $N_{\max}<N^*$. Recall that $N^*(m)$ is onto $(0,+\infty)$, and thus there exists 
$m_{\max}<m^*$ such that $N^*(m_{\max})=N_{\max}$. We wish to investigate the graph of the function $\alpha(N,m_{\max})$, and claim that:
\begin{enumerate}
\item $\alpha(N,m_{\max})$ is continuous for $N$ in $[0,K]$, and 
$$
\lim_{N\to 0+}\alpha(N,m_{\max})=-\frac{r}{m_{\max}},\textrm{ and }\lim_{N\to N_{\max}}\alpha(N,m_{\max})=:\alpha^*>0.
$$
\item $\alpha(N,m_{\max})$ is increasing on $[0,K]$.
\item For $m^*>m_{\max}$, the graph of $\alpha(N,m_{\max})$ lies above the graph of $\alpha(N,m^*)$ for $N$ in $(N_{\min},N_{\max})$, but below it for $N$ in $(N^*,K]$, see Figure $\ref{alpha1}$.
\end{enumerate}

\begin{figure}
    \centering
        \includegraphics[width=4.0in]{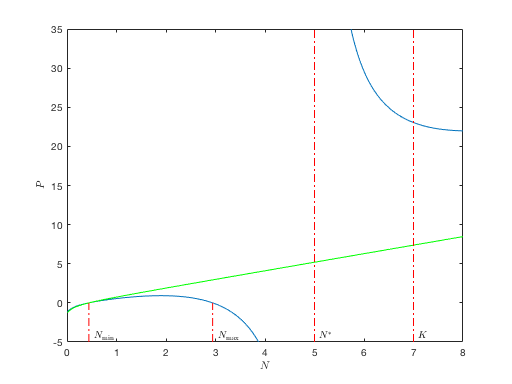}
    \caption{Graph of $\alpha(N,m^*)$ (blue) and $\alpha(N,m_{\max})$ (green). Parameter values: $r=a=h=1$, $K=7$, $g=1/7$, $m^*=0.8294$ (and $N^*=5$), $m_{\max}=0.7373$ (and $N_{\max}=2.9422$).}
    \label{alpha1}  
\end{figure}

Items 1, 2 and 3, together with the fact that $\alpha(N,m^*)\leq  0$ when $N$ belongs to $(0,N_{\min}]$ or to $[N_{\max},N^*)$, and the fact that $\alpha^*$ defined in item 1 above is positive, 
imply that the divergence of the scaled vector field in $(\ref{divergence})$ is negative in the strip $\{(N,P)\, |\, 0<N<K, P>0\}$ when we set $\alpha=\alpha^*$.

{\bf Proofs of the 3 items above}:
\begin{enumerate}
\item We calculate:
\begin{eqnarray*}
&&\lim_{N\to 0+} \alpha(N,m_{\max})\\
&=&\frac{1}{m_{\max}}\lim_{N\to 0+}\frac{aN^2}{ahN^2+N+g}.\; \frac{r}{a}\left(-2\frac{ah}{ K}N+\left(ah-\frac{1}{ K} \right)-\frac{g}{N^2} \right)\\
&=&\frac{r}{m_{\max}}\lim_{N\to 0+}\frac{1}{ahN^2+N+g}.\; \left(-2\frac{ah}{ K}N^3+\left(ah-\frac{1}{ K} \right)N^2-g \right)\\
&=&-\frac{r}{m_{\max}}
\end{eqnarray*}
By de L'Hopital's rule:
\begin{equation*}
\lim_{N\to N_{\max}} \alpha(N,m_{\max})=\lim_{N\to N_{\max}}\frac{-f'h'-fh''}{f'}=\lim_{N\to N^*}\frac{-fh''}{f'}=:\alpha^*>0
\end{equation*}
Continuity of $\alpha(N,m_{\max})$ is now obvious.

\item After simplification, and using the specific expression $(\ref{functional})$ of the functional $f(N)$, 
we have that for $N>0$, the partial derivative of $\alpha$ with respect to $N$ is given by:
\begin{eqnarray}
\alpha'(N,m_{\max})&=& \frac{m_{\max}f'h'-fh''(f-m_{\max})}{(f-m_{\max})^2} \nonumber \\
&=&\frac{r}{(f-m_{\max})^2(ahN^2+N+g)^2}g(N) \label{alpha-prime},
\end{eqnarray}
where
\begin{eqnarray}
g(N)&=&2 \frac{ah}{K}a (1-m_{\max}h)N^4-4m_{\max}\frac{ah}{K}N^3+(-6m_{\max}g\frac{ah}{K}+m_{\max}(ah-\frac{1}{K}))N^2 \nonumber \\
&&+ 2g(m_{\max}(ah-\frac{1}{K})-a(1-m_{\max}h))N+m_{\max}g\label{quartic}
\end{eqnarray}

We wish to show that this function is zero for $N=N_{\max}$, and positive for any $N\neq N_{\max}$ in the interval $(0,K]$. From this, the desired result follows.

First, we claim that $g(N)$ has a zero of at least second order at $N_{\max}$ (i.e. $g(N_{\max})=g'(N_{\max})=0$). To see this, note that it follows from $(\ref{alpha-prime})$ that for $N>0$:
$$
rg(N)=(ahN^2+N+g)^2\left(m_{\max}f'h'-fh''(f-m_{\max})\right),
$$
from which it is clear that $g(N_{\max})=0$. Furthermore, taking the derivative with respect to $N$, yields:
\begin{eqnarray*}
rg'(N)&=&2(ahN^2+N+g)(2ahN+1)\left(m_{\max}f'h'-fh''(f-m_{\max})\right)\\
&&+(ahN^2+N+g)^2\left(m_{\max}f''h'-(f-m_{\max})(2f'h''+fh''')\right),
\end{eqnarray*}
from which also follows that $g'(N_{\max})=0$.

Thus, there exist constants $\alpha,\beta$ and $\gamma$ such that the 4th order polynomial $g(N)$ can be factored as:
$$
g(N)=(N-N_{\max})^2(\alpha N^2 + \beta N +\gamma)
$$
To determine $\alpha$, $\beta$ and $\gamma$, we identify the above expression with $(\ref{quartic})$, which yields:
\begin{eqnarray*}
\alpha &=&2\frac{ah}{ K}a(1-m_{\max}h)\\
\beta &=&4\frac{ah}{K}a(1-m_{\max}h)(N_{\max}-N^*_0)\\
\gamma &=&\frac{m_{\max}g}{(N_{\max})^2}
\end{eqnarray*} 
where $N^*_0$ is the solution to the equation $f(N)=m_{\max}$ but for the case where $g=0$. It is easy to see that $N_{\max}=N^*(m_{\max})>N^*_0$. 
Since $m_{\max}$ belongs to $(0,1/h)$, there follows that $\alpha>0$, and  then also that $\beta>0$. Finally, $\gamma$ is obviously positive as well. Consequently, 
$g(N)>0$ for all positive $N\neq N_{\max}$.

\item
Observe that for all $m>0$, and as long as $N_{\max}\leq N^*(m)$:
\begin{eqnarray*}
\frac{\partial \alpha}{\partial m }(N,m)&=&-\frac{fh'}{(m-f)^2}
\begin{cases}
<0,\textrm{ for }N \textrm{ in } (N_{\min},N_{\max})\\
>0,\textrm{ for }N\textrm{ in }(N^*(m),K]
\end{cases}
\end{eqnarray*}
From this follows the statement made in item 3.

\end{enumerate}

{\bf Case 2}:  If $N^*<N_{\min}$, then there exists $m_{\min}>m^*$ such that $N^*(m_{\min})=N_{\min}$.
This time we investigate the graph of the function $\alpha(N,m_{\min})$. We claim that:
\begin{enumerate}
\item $\alpha(N,m_{\min})$ is continuous for $N$ in $[0,K]$, and 
$$
\lim_{N\to 0+}\alpha(N,m_{\min})=-\frac{r}{m_{\min}},\textrm{ and }\lim_{N\to N_{\min}}\alpha(N,m_{\min})=:\alpha^*<0.
$$
\item $\alpha(N,m_{\min})$ is increasing on $[0,K]$.
\item The graph of $\alpha(N,m_{\min})$ lies above the graph of $\alpha(N,m^*)$ for $N$ in $(0,N^*)$, but below it for $N$ in $(N_{\min},N_{\max})$, see Figure $\ref{alpha2}$.
\end{enumerate}
The proof of these 3 items is entirely analogous to the proof given in {\bf Case 1}, and therefore omitted. 

To conclude the proof in this case, we note that 
items 1, 2 and 3, together with the fact that $\alpha(N,m^*)\geq  0$ when $N$ belongs to $(N^*,N_{\min})$ or to $(N_{\max},K]$, and the fact that $\alpha^*$ defined in item 1 above is negative, imply that the divergence of the scaled vector field in $(\ref{divergence})$ is negative in the strip $\{(N,P)\, |\, 0<N<K, P>0\}$ when we set $\alpha=\alpha^*$.

\begin{figure}
    \centering
        \includegraphics[width=4.0in]{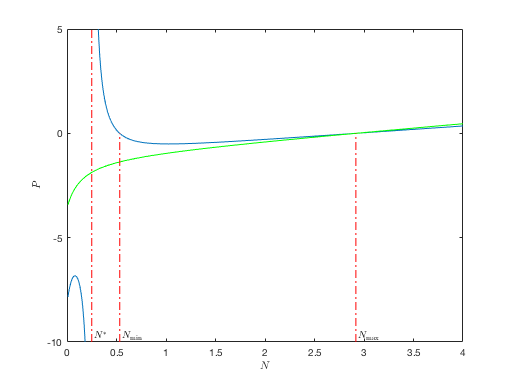}
    \caption{Graph of $\alpha(N,m^*)$ (blue) and $\alpha(N,m_{\max})$ (green). Parameter values: $r=a=h=1$, $K=7$, $g=1/7$, $m^*=0.1220$ (and $N^*=1/4$), $m_{\min}=0.2791$ (and $N_{\min}=0.5326$).}
    \label{alpha2}  
\end{figure}

\end{proof}

\subsection{Unique stable limit cycle}
Our second main result is as follows:
\begin{stel}\label{main2}
Assume that $N^*<K$, and that $h(N)$ has a local minimum at $N=N_{\min}$ and a local maximum at $N_{\max}$, 
where $0<N_{\min}<N_{\max}<K$. Furthermore, assume that 
\begin{equation}\label{between}
N_{\min}<N^*<N_{\max},
\end{equation}
Then the system $(\ref{s1})-(\ref{s2})$ with $(\ref{functional})$ has three steady states $E_0=(0,0)$, $E_1=(K,0)$ and a coexistence steady state $E^*=(N^*,P^*)$. All solutions with initial condition $(N_0,P_0)$ such that $N_0=0$, converge to $E_0$ at $t\to +\infty$. All solutions with initial condition $(N_0,P_0)$ such that $N_0>0$ and $P_0=0$, converge to $E_1$ as $t\to +\infty$. All solutions with initial condition $(N_0,P_0)\neq E^*$ such that $N_0>0$ and $P_0>0$, converge to a unique limit cycle as $t\to +\infty$.
\end{stel}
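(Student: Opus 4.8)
The plan is to prove the statement in two stages: first, existence of at least one limit cycle via the Poincar\'e--Bendixson theorem; second, uniqueness together with global attractivity via a monotonicity criterion for limit cycles in Gause-type systems. The claims about the three steady states and about solutions started on the two axes are obtained exactly as in Theorem~\ref{decreasing-nullcline}, using invariance of the boundary and the Butler--McGehee lemma~\cite{smith-waltman} to keep $E_0$ and $E_1$ out of the omega limit set of any interior orbit. The one structurally new feature is that hypothesis~\eqref{between} places $N^*$ in $(N_{\min},N_{\max})$, so \eqref{hprime-signs} gives $h'(N^*)>0$; hence the trace $\frac{m}{r}h'(N^*)$ of $J(E^*)$ is now positive while its determinant $mP^*f'(N^*)$ is still positive, making $E^*$ a hyperbolic repeller rather than a sink.

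First I would settle existence. Since $E^*$ is a repeller, it cannot lie in the omega limit set of any interior orbit except the constant orbit $E^*$: otherwise the Butler--McGehee argument would force a point of the stable manifold of $E^*$ into the compact invariant omega limit set, but that manifold is the singleton $\{E^*\}$, and $\omega=\{E^*\}$ is impossible for a nonconstant orbit. Together with the exclusion of $E_0$ and $E_1$, this shows that the omega limit set of any interior point $\neq E^*$ contains no equilibrium, so by boundedness (Lemma~\ref{well-posed}) and the Poincar\'e--Bendixson theorem it is a periodic orbit. Concretely, excising a small repelling neighbourhood of $E^*$ from a large forward-invariant triangle $T_B$ yields an annular trapping region without equilibria. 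This already gives at least one limit cycle and, more, shows every interior orbit other than $E^*$ limits onto some periodic orbit.

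The crux is uniqueness. The plan is to write \eqref{s1}--\eqref{s2} (with $m$ replaced by $m^*$) as a Gause predator--prey system and apply the classical uniqueness theorem of Kuang and Freedman, whose sufficient condition is that the function
\[
\Phi(N):=\frac{f(N)\,h'(N)}{f(N)-m^*}=-\alpha(N,m^*)
\]
be monotone on $(0,K]\setminus\{N^*\}$. The key observation making this tractable is that $\Phi$ is, up to sign, precisely the function $\alpha(\,\cdot\,,m)$ already introduced in the proof of Theorem~\ref{main1}: the numerator of the Kuang--Freedman function collapses to $f(N)h'(N)$. Thus uniqueness reduces to showing that $\alpha(\,\cdot\,,m^*)$ is \emph{increasing} on each side of its vertical asymptote at $N^*$ (the same direction already verified for $m_{\min}$ and $m_{\max}$), equivalently that the quartic numerator $g(N)$ of $\alpha'(N,m^*)$ in \eqref{quartic}--\eqref{alpha-prime} is positive on $(0,K]$.

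This last step is where I expect the main obstacle to be. The convenient factorization $g(N)=(N-N_{\max})^2(\alpha N^2+\beta N+\gamma)$ used in Theorem~\ref{main1} exploited the coincidence $h'=0=f-m$ at the critical point, and this is no longer available since $h'(N^*)>0$; indeed one computes $g(N^*)=\frac{(ah(N^*)^2+N^*+g)^2}{r}\,m^*f'(N^*)h'(N^*)>0$, so $N^*$ is not even a root. Instead I would argue positivity of the quartic directly: its leading coefficient $2\frac{ah}{K}a(1-m^*h)$ and its constant term $m^*g$ are positive, and positivity at $N_{\min}$, $N^*$, and $N_{\max}$ follows from the sign pattern of $h''$ (positive on $(0,N_i)$, negative on $(N_i,K]$ by \eqref{h2}--\eqref{h3}) together with $f-m^*<0$ on $(0,N^*)$ and $f-m^*>0$ on $(N^*,K]$. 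Ruling out a negative dip in between is the delicate part, and I expect Lemma~\ref{inflection} to be decisive, since the concavity $f''<0$ for $N\ge N_i$ controls the $-fh''(f-m^*)$ term. Once $\alpha(\,\cdot\,,m^*)$ is shown monotone, the Kuang--Freedman criterion yields exactly one limit cycle; combining this with the existence step and with $E^*$ being a repeller shows that every interior orbit other than $E^*$ converges to that unique cycle, which is the assertion of the theorem.
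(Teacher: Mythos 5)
Your overall route coincides with the paper's: reduce uniqueness to the Kuang--Freedman criterion for Gause-type systems, recognize that the relevant function is $-\alpha(N,m^*)=\frac{f(N)h'(N)}{f(N)-m^*}$, and reduce its monotonicity to nonnegativity of the quartic $g(N,m^*)$ of \eqref{quartic} on $(0,K]$. Your existence step (Butler--McGehee plus Poincar\'e--Bendixson on the annulus obtained by excising a repelling neighborhood of $E^*$) is correct, though strictly redundant, since the Kuang--Freedman theorem already delivers existence along with uniqueness and stability. The genuine gap is exactly at the point you flag yourself: positivity of the quartic on all of $(0,K]$. Positivity of the leading coefficient, of the constant term, and of the values at $N_{\min}$, $N^*$ and $N_{\max}$ does not control a quartic on an interval --- it can perfectly well dip negative between sample points --- and ``I expect Lemma \ref{inflection} to be decisive'' is a hope, not an argument. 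This verification is the entire mathematical content of the theorem; everything surrounding it is routine.

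For comparison, the paper closes this gap with a three-interval argument. On $(0,N_i]$ it does not work with $m^*$ directly: it uses the fact that $g(N,m)$ is affine in $m$, writes $g(N,m)=-\frac{a^2}{r}N^4 h''(N)+m\,s(N)$, and invokes the factorization result of Theorem \ref{main1} at the parameter value $m_{\min}$ (defined by $N^*(m_{\min})=N_{\min}$), where the double-root factorization that you correctly note fails at $m^*$ \emph{does} apply, because $h'$ and $f-m_{\min}$ vanish simultaneously at $N_{\min}$. This gives $g(N,m_{\min})\geq 0$ for all $N\geq 0$; since $h''\geq 0$ on $(0,N_i]$, the displayed identity forces $\partial g/\partial m=s(N)\geq 0$ there, and hence $g(N,m^*)\geq g(N,m_{\min})\geq 0$ because $m^*>m_{\min}$. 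On $[N_i,N_{\max}]$ and on $[N_{\max},K]$ the paper evaluates $mf'h'-f(f-m)h''$ at one endpoint (where it is positive, as in your spot checks) and shows that its derivative $mf''h'-(f-m)(2f'h''+fh''')$ keeps a fixed sign on the whole subinterval, using Lemma \ref{inflection} ($f''<0$ for $N\geq N_i$); the function is therefore monotone toward the endpoint at which it is positive, hence positive throughout. This interpolation-in-$m$ idea plus the endpoint-monotonicity arguments are what your write-up is missing. A further technical point you should also address: hypothesis (H3) of Kuang--Freedman requires $f'(0)\neq 0$, which fails for the Holling type III response \eqref{functional}; the paper observes that simplicity of the zero of $f$ at $N=0$ is never used in their proof, and that it suffices to verify the sign condition on $(0,K]$ rather than on $[0,K]$.
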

\begin{proof}
The existence of the 3 steady states $E_0$, $E_1$ and $E^*$ is immediate, and as in the proof of Theorem $\ref{decreasing-nullcline}$, a linearization argument shows that 
$E_0$ and $E_1$ are saddles. The Jacobian matrix at $E^*$ is:
$$
J(E^*)=\begin{pmatrix}
r(1-2N^*/K)-f'(N^*)P^*&-m\\
P^*f'(N^*)&0
\end{pmatrix}=\begin{pmatrix}\frac{m}{r}h'(N^*)&-m\\ P^*f'(N^*)&0 \end{pmatrix},
$$
which has positive trace, from which follows that $E^*$ is unstable. To show the existence of a unique, stable limit cycle, we apply 
Theorem 4.2 in \cite{kuang}. That result is proved under the assumption that $f(N)$ has a simple zero at $N=0$ (i.e. $f(0)=0$, and $f'(0)\neq 0$, see assumption (H3) in \cite{kuang}), which is not satisfied for the 
functional $f(N)$ in $(\ref{functional})$ used here. Indeed, the $f(N)$ used here has a zero of order two at $N=0$ (i.e. $f(0)=f'(0)=0$ and $f''(0)\neq 0$). However, the simplicity of the zero of $f(N)$ 
at $N=0$ is never used in the proof of Theorem 4.2 in \cite{kuang}. Finally, the main condition imposed in \cite{kuang} to establish the existence of a unique periodic solution, is condition (4.18) in that paper. This condition states that a specific function, stated below in $(\ref{unique})$, must be sign-definite for all $N$ in $[0,K]$. However, in our model, this function is only defined for $N$ in $(0,K]$ (this is due precisely to the fact that $f(N)$ has a zero of order two at $N=0$, as  pointed out above). But again, this does not create significant problems. Instead, it suffices to check that this function is sign-definite for $N$ in $(0,K]$, and this will suffice to establish the existence of a unique and stable limit cycle. The condition we need to verify is as follows:
\begin{equation}\label{unique}
mf'h'-f(f-m)h''\geq 0,\textrm{ for all } 0< N\leq K.
\end{equation}
To verify that his condition holds, we shall divide the interval $(0,K]$ into three subintervals, and prove the 
validity of $(\ref{unique})$ on each subinterval. 

\begin{enumerate}
\item $0< N\leq N_i$: 

For fixed parameters $r,K,a,h$ and $g$, and once again treating $m$ as a variable parameter, the implicit function Theorem implies that $N^*(m)$ (the unique solution of $f(N)=m$) is a smooth, and increasing function defined for $m$ in $(0,1/h)$. Recall also that $\lim_{m\to 0+}N^*(m)=0$ and $\lim_{m\to 1/h}N^*(m)=+\infty$, which implies that the map $N^*(m)$ is onto $(0,+\infty)$. Let $m_{\min}<m^*<m_{\max}$ be the 3 values of $m$ where the function $N^*(m)$ equals  $N_{\min}<N^*<N_{\max}$ respectively.

Using the specific expression $(\ref{functional})$ for the functional $f(N)$, the function appearing on the left hand side of the inequality in $(\ref{unique})$ is:
\begin{equation}\label {unique2}
mf'h'-f(f-m)h''=\frac{r}{(ahN^2+N+g)^2}g(N,m),
\end{equation}
where $g(N,m)$ was already defined in $(\ref{quartic})$ (but only for the case that $m=m_{\max}$) as follows:
\begin{eqnarray*}
g(N,m)&=&2 \frac{ah}{K}a (1-mh)N^4-4m\frac{ah}{K}N^3+(-6mg\frac{ah}{K}+m(ah-\frac{1}{K}))N^2 \\
&&+ 2g(m(ah-\frac{1}{K})-a(1-mh))N+mg
\end{eqnarray*}
Our goal is to show that 
\begin{equation}\label{goal}
g(N,m^*)\geq 0, \textrm{ for all } 0 \leq N\leq N_i.
\end{equation}

First, note that $g(N,m)$ is linear in $m$, and recalling $(\ref{h2})$ we can re-write $g(N,m)$ as follows:
\begin{eqnarray}
g(N,m)&=&\left(2\frac{a^2h}{K}N^4-2ag N \right) + ms(N) \nonumber\\
&=&-\frac{a^2}{r}N^4\frac{\partial ^2 h}{\partial N^2}(N)+ms(N) \label{aux1},
\end{eqnarray}
where 
$$
s(N)=-2\frac{(ah)^2}{K}N^4-4\frac{ah}{K}N^3+(-6g\frac{ah}{K} + ah -\frac{1}{K})N^2+2g(2ah-\frac{1}{K})N+g
$$

We have established in item 2 of the proof of both cases of Theorem $\ref{main1}$ that
\begin{equation}\label{aux2}
g(N,m_{\min})\geq 0,\; \textrm{for all } N\geq 0.
\end{equation}

Now, for every $m$ in $(0,1/h)$, and $N\geq 0$, there holds:
$$
\frac{\partial g}{\partial m}=s(N).
$$
Consequently, using $(\ref{aux1})$ and $(\ref{aux2})$, we have that for all $m$ in $(0,1/h)$, and $N$ in $[0, K]$:
\begin{equation}\label{help}
\frac{\partial g}{\partial m}=s(N)=\frac{\partial g}{\partial m}(N,m_{\min})\geq \frac{a^2}{r m_{\min}}N^4\frac{\partial ^2 h}{\partial N^2}(N)
\end{equation}
But $\partial^2 h/\partial N^2(N)\geq 0$ for $N$ in $(0,N_i]$, and thus $(\ref{help})$ and $(\ref{aux2})$ imply that:
$$
g(N,m)\geq 0\textrm{ for all } m\geq m_{\min}\textrm{ and } N \textrm{ in } (0,N_i].
$$
In particular, $(\ref{goal})$ holds.

\item $N_i \leq N \leq N_{\max}$: We distinguish 2 cases, depending on the relative location of $N^*$ and $N_i$:

{\bf Case 1}: $N_i<N^*$. In this case we divide the interval $[N_i, N_{\max}]$ into two further subintervals:

\begin{itemize}
\item $N_i\leq N \leq N^*$: To establish that $(\ref{unique})$ holds when $N$ belongs to this interval, we first evaluate the function in the right-endpoint $N^*$, 
and see that the function is positive there. Next, we calculate the derivative of this function:
$$
mf''h'-(f-m)(2f'h''+fh''')
$$
By inspection it follows that this derivative is negative when $N$ belongs to the interval $[N_i, N^*]$ (here, we have used Lemma $\ref{inflection}$ which implies that $f''(N)<0$ when $N\geq N_i$). Consequently, the function 
$mf'h'-f(f-m)h''$ is decreasing on the interval $[N_i,N^*]$, and as it is positive in the right-endpoint, the function is positive in the entire interval.

\item $N^*\leq N\leq N_{\max}$: It is immediately clear that $(\ref{unique})$ holds when $N$ belongs to the interval $[N^*,N_{\max}]$ by inspection of the signs of the 
various factors and terms in the function $mf'h'-f(f-m)h''$, given the fact that $N_i<N^*\leq N$ when $N$ belongs to this interval, whence $h''(N)\leq 0$. 

\end{itemize}

{\bf Case 2}: $N^*\leq N_i$. In this case, $(\ref{unique})$ is easily seen to hold on the interval $[N_i,N_{\max}]$, using the same argument as in the second item of {\bf Case 1} above.

\item $N_{\max}\leq N\leq K$: We first evaluate the function $mf'h'-f(f-m)h''$ in the left-endpoint $N_{\max}$, and see that the function is positive there. 
The derivative 
$$
mf''h'-(f-m)(2f'h''+fh''')
$$
of this function is positive on the interval $[N_{\max},N^*]$ (here, we have used Lemma $\ref{inflection}$ which implies that $f''(N)<0$ when $N\geq N_{\max}$). 
Consequently,  the function $mf'h'-f(f-m)h''$ is increasing on the interval $[N_{\max},K]$, and as it is positive in the left-endpoint $N_{\max}$, the function is 
positive in the entire interval.
\end{enumerate}
\end{proof}

{\bf Hopf bifurcations are supercritical}: Theorem $\ref{main1}$ and $\ref{main2}$ suggest that when $N^*$ coincides with either $N_{\min}$ (where $h(N)$ achieves a local minimum), or with $N_{\max}$ (where $h(N)$ 
achieves a local maximum), then a Hopf bifurcation occurs. The Jacobian matrix at the coexistence steady state $E^*=(N^*,P^*)$ is:
$$
J(E^*)=\begin{pmatrix}\frac{m}{r}h'(N^*)&-m\\ P^*f'(N^*)&0 \end{pmatrix},
$$
and clearly shows that $E^*$ is a center when $N^*=N_{\min}$ or $N^*=N_{\max}$, and also reveals 
the switch in stability of $E^*$ when $N^*$ crosses either $N_{\min}$ or $N_{\max}$: $E^*$ is a stable spiral when $h'(N^*)<0$, and an unstable spiral when $h'(N^*)>0$. Moreover, using $N^*$ as a bifurcation parameter, it is clear that the eigenvalues of $J(E^*)$ cross the imaginary axis transversally when $N^*$ 
crosses either $N_{\min}$ or $N_{\max}$. Indeed, the  sum of both eigenvalues (which is twice the real part of each eigenvalue) 
equals $(m/r)h'(N^*)$, and the derivative with respect to $N^*$ of this expression is $(m/r)h''(N^*)$, which is positive when $N^*=N_{\min}$, and negative when $N^*=N_{\max}$.
To determine the nature of the Hopf bifurcation (sub- or supercritical), we determine the sign of the following quantity \cite{gail}:
$$
\Omega(N^*)=h''(N^*) \left(2f'(N^*)-\frac{f(N^*)f''(N^*)}{f'(N^*)} \right)+h'''(N^*)f(N^*)
$$
in the cases where $N^*=N_{\min}$, and $N^*=N_{\max}$. When $\Omega(N^*)<0$, the Hopf bifurcation is supercritical, and when $\Omega(N^*)>0$ it is subcritical \cite{gail}. 
We will see that in both cases, $N^*=N_{\min}$ and $N^*=N_{\max}$, the Hopf bifurcation is supercritical. Indeed, suppressing a straightforward 
algebraic calculation using the derivatives $(\ref{h1})$, $(\ref{h2})$ and $(\ref{h3})$ of $h(N)$, and the derivatives $(\ref{f1})$ and $(\ref{f2})$ of $f(N)$, yields that:
\begin{equation*}
\Omega(N^*)=-\frac{2rf(N^*)}{aN^*(N^*+2g)}\left(2\frac{ah}{K}(N^*+3g)+\frac{3g}{(N^*)^2} \right),
\end{equation*}
which is clearly negative when $N^*=N_{\min}$ or $N^*=N_{\max}$. Consequently, we can generalize the conclusion of Theorem $\ref{main1}$, to also include the cases when $N^*=N_{\min}$, and $N^*=N_{\max}$:
\begin{gevolg}\label{all}
Theorem $\ref{main1}$ remains valid if $(\ref{more})$ and $(\ref{less})$ are respectively replaced by 
$$
N_{\max}\leq N^*,\textrm{ and }N^*\leq N_{\min}.
$$
\end{gevolg}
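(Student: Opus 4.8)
The plan is to obtain Corollary \ref{all} not by a fresh computation, but by re-running the Bendixson--Dulac argument of Theorem \ref{main1} at the two degenerate parameter values. The crucial observation is that the entire proof of Theorem \ref{main1} was already carried out at the auxiliary values $m=m_{\max}$ (Case 1) and $m=m_{\min}$ (Case 2), and the only place where the strict separation $N^*\neq N_{\max}$ (resp. $N^*\neq N_{\min}$) was actually used is item 3, the comparison of the two graphs of $\alpha$. When $N^*=N_{\max}$ we simply have $m^*=m_{\max}$, and when $N^*=N_{\min}$ we have $m^*=m_{\min}$; in either case item 3 becomes unnecessary, while items 1 and 2 apply verbatim. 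So the corollary should follow by re-examining that proof, treating $N^*=N_{\max}$ in detail and $N^*=N_{\min}$ by the symmetric argument.

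First I would record how the equality $N^*=N_{\max}$ affects $\alpha(N,m^*)=-f(N)h'(N)/(f(N)-m^*)$. Generically this has a pole at $N=N^*$, but at the boundary value $h'(N^*)=0$, so the singularity is removable, and by items 1 and 2 of Theorem \ref{main1} (stated precisely at $m=m_{\max}$) the extended function $\alpha(\cdot,m^*)$ is continuous and strictly increasing on $[0,K]$, with $\alpha(N^*,m^*)=\alpha^*$. Since the linearization at $E^*$ is degenerate here (the trace of $J(E^*)$ equals $(m/r)h'(N^*)=0$, so $E^*$ is a linear center), the linear part decides nothing; the supercriticality computation $\Omega(N^*)<0$ from the preceding subsection already certifies that $E^*$ is nonlinearly locally asymptotically stable, a fact the global argument below will subsume.

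The key step is to evaluate the sign of the divergence (\ref{divergence}) with the Dulac exponent chosen to be $\alpha=\alpha^*$. Using the defining identity $h'(N)=-\alpha(N,m^*)(f(N)-m^*)/f(N)$, the divergence factors as
$$
P^{\alpha^*}\,\frac{f(N)-m^*}{f(N)}\bigl(\alpha^*-\alpha(N,m^*)\bigr).
$$
Because $f$ is increasing with $f(N^*)=m^*$, the factor $f-m^*$ changes sign exactly at $N=N^*$, while the strictly increasing factor $\alpha^*-\alpha(N,m^*)$ changes sign at the same point in the opposite direction; hence the two factors have opposite signs for every $N\neq N^*$, and the divergence is strictly negative on $S\setminus\{N=N^*\}$, i.e. negative almost everywhere in the simply connected strip $S$. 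I would then close exactly as in Theorems \ref{decreasing-nullcline} and \ref{main1}: Bendixson--Dulac rules out every closed orbit and homoclinic loop in $S$ (any such curve stays in $\{N<K\}$ since $\dot N<0$ for $N\geq K$), the Butler--McGehee lemma keeps $E_0$ and $E_1$ out of any interior $\omega$-limit set, and Poincar\'{e}--Bendixson then forces every interior $\omega$-limit set to equal $\{E^*\}$, which is the claimed global convergence. The boundary statements carry over unchanged from Theorem \ref{main1}.

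The main obstacle I anticipate is entirely the bookkeeping of this factorization: one must verify that the removable singularity really yields a single monotone branch of $\alpha(\cdot,m^*)$ through $(N^*,\alpha^*)$, so that the choice $\alpha=\alpha^*$ annihilates the divergence only on the measure-zero line $N=N^*$ and nowhere else, and one must confirm that almost-everywhere negativity (rather than strict negativity throughout) still suffices for the Green's-theorem form of Bendixson--Dulac to exclude closed orbits and graphics. Both points are settled by the strict monotonicity from item 2 (the quartic $g$ has a double root only at $N^*$) together with the standard statement of the criterion, so no estimate beyond items 1 and 2 of the proof of Theorem \ref{main1} is needed.
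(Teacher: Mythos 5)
Your proposal is correct, but it reaches Corollary \ref{all} by a genuinely different route than the paper. The paper deduces the corollary from the Hopf-bifurcation discussion that precedes it: it computes the quantity $\Omega(N^*)$ at $N^*=N_{\min}$ and at $N^*=N_{\max}$, shows $\Omega(N^*)<0$ in both cases, concludes that both Hopf bifurcations are supercritical, and on that basis asserts that the conclusion of Theorem \ref{main1} extends to the two critical values. You instead re-run the Bendixson--Dulac machinery of Theorem \ref{main1} at the degenerate values, using the observation that $N^*=N_{\max}$ forces $m^*=m_{\max}$ (by injectivity of $m\mapsto N^*(m)$), so that item 3 of that proof -- the only step needing strict separation -- becomes vacuous, while items 1 and 2 apply verbatim and give the factorization of the divergence as
$$
P^{\alpha^*}\,\frac{f(N)-m^*}{f(N)}\bigl(\alpha^*-\alpha(N,m^*)\bigr),
$$
whose two factors change sign at $N=N^*$ in opposite senses (strict monotonicity of $\alpha(\cdot,m_{\max})$ follows from the double root of the quartic $g$); hence the divergence is negative almost everywhere in $S$, which is exactly the a.e.\ version of the Dulac criterion the paper itself already invokes in the proof of Theorem \ref{decreasing-nullcline}. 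As for what each approach buys: yours is self-contained and is actually the stronger argument for the \emph{global} claim, since supercriticality is purely local information -- it certifies local asymptotic stability of the linear center and the absence of small-amplitude cycles, but leaves the exclusion of large-amplitude periodic orbits and homoclinic loops at the critical parameter implicit, a gap your divergence estimate (together with your explicit treatment of graphics and Butler--McGehee) closes outright. The paper's computation, in turn, yields dynamical information that no Dulac-function argument provides, namely the direction and nature of the bifurcation (supercritical, so stable small cycles emerge as $N^*$ crosses into $(N_{\min},N_{\max})$), which is of independent interest in the ecological discussion that follows.
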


\section{Comparison to the Rosenzweig-MacArthur model}
Here we shall compare the dynamics of the model studied in this paper, to the Rosenzweig-MacArthur model \cite{rosenzweig}, 
obtained by setting $f(N)=f_{II}(N)$ (see $(\ref{hollingII})$) in $(\ref{s1})-(\ref{s2})$. But first we offer some historical perspective. 
Despite the central role of the Rosenzweig-MacArthur model in ecology and mathematical biology, more than 20 years (25 to be precise) has passed between its initial proposal in \cite{rosenzweig}, and a complete and rigorous analysis of its dynamics. The difficulty seems to have been to establish the proof of uniqueness of the limit cycle, which was first announced in \cite{cheng}. According to \cite{gail} however, the proof in \cite{cheng} contained a flaw, which was fixed later in \cite{cheng2}. A concise analysis of the dynamics of the Rosenzweig-MacArthur model can be found in \cite{smith}, and is summarized next. 
First, the Rosenzweig-MacArthur model also always has the extinction steady state $E_0=(0,0)$ and the prey-only steady state $E_1=(K,0)$, just like the model presented here.
The prey nullcline of the Rosenzweig-MacArthur model is a segment of a parabola, given by:
$$
P=\frac{r}{s}\left(1-\frac{N}{K}\right)(shN+1).
$$
The maximum of the parabola is located in the interior of the positive orthant $\reals^2_+$ if and only if:
\begin{equation}\label{up-down}
K>1/sh,
\end{equation}
and in this case this maximum occurs at: 
\begin{equation}\label{max}
{\bar N}_{\max}:=\frac{1}{2}(K-1/sh)
\end{equation}
The predator nullcline is a vertical line $N=N^*$, where $N^*$ is the solution of $f_{II}(N)=m$. Note that $N^*$ exists if and only if $m<1/h$, a condition which is assumed to hold henceforth.   
Therefore, the Rosenzweig-MacArthur model has  a unique coexistence steady state $E^*=(N^*,P^*)$ if and only if $P^*:=(1-N^*/K)(shN^*+1)$ is positive, or equivalently when $N^*<K$. The global dynamics of the Rosenzweig-MacArthur model is summarized next.
\begin{stel} \label{RM}
Consider system $(\ref{s1})-(\ref{s2})$ with $f(N)=f_{II}(N)$ the Holling type II functional response defined in $(\ref{hollingII})$. Assume that $(\ref{up-down})$ 
holds, and that there exists a unique coexistence steady state $E^*=(N^*,P^*)$, in addition to the steady states $E_0=(0,0)$ and $E_1=(K,0)$ which always exist.\\
\noindent
{\bf Case 1}: If ${\bar N}_{\max}\leq N^*$, then $E^*$ is globally asymptotically stable with respect to initial conditions $(N_0,P_0)$ in the interior of $\reals^2$.\\
 \noindent
 {\bf Case 2}: If $N^*<{\bar N}_{\max}$, then $E^*$ is unstable, and there exists a unique stable limit cycle which attracts all solutions with 
initial conditions $(N_0,P_0)\neq E^*$ in the interior of $\reals^2$.
\end{stel}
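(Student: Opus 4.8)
The plan is to reuse, in simplified form, the apparatus already built for the functional response $(\ref{functional})$: because $f_{II}$ in $(\ref{hollingII})$ is concave, the prey nullcline degenerates to the single-hump parabola of $(\ref{max})$ and every estimate becomes easier. First I would record the steady states and linearize. Since $\dot N+\dot P=rN(1-N/K)-mP$ does not involve the functional response, the dissipativity argument of Lemma $\ref{well-posed}$ applies verbatim, so every orbit has a nonempty compact omega-limit set. The Jacobian at $E^*$ retains the same form as in the proofs of Theorems $\ref{decreasing-nullcline}$ and $\ref{main2}$, with $(1,1)$-entry $(m/r)h'(N^*)$, upper-right entry $-m$, and positive determinant $mP^*f_{II}'(N^*)$, so the local stability of $E^*$ is decided by the sign of $h'(N^*)$ alone. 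As $h$ is here the downward parabola with its unique critical point at $\bar N_{\max}$ (interior to $(0,K)$ by assumption $(\ref{up-down})$), one has $h'(N^*)<0$ exactly when $N^*>\bar N_{\max}$ (Case 1, $E^*$ asymptotically stable), $h'(N^*)=0$ at $N^*=\bar N_{\max}$ (a linear center), and $h'(N^*)>0$ when $N^*<\bar N_{\max}$ (Case 2, $E^*$ unstable). Exactly as in Theorem $\ref{decreasing-nullcline}$, $E_0$ and $E_1$ are saddles with stable manifolds on the axes, so a Butler--McGehee argument confines every interior omega-limit set to the strip $S=\{(N,P)\mid 0<N<K,\ P>0\}$; by Poincar\'e--Bendixson the theorem then reduces to controlling periodic orbits in $S$.

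For Case 1 I would rule out cycles with the Dulac multiplier $P^{\alpha-1}/f_{II}(N)$, adapting the construction in the proof of Theorem $\ref{main1}$ with $\bar N_{\max}$ playing the role of $N_{\max}$. Introducing the auxiliary mortality $\bar m_{\max}$ with $N^*(\bar m_{\max})=\bar N_{\max}$ and taking $\alpha$ to be the limiting value at $\bar N_{\max}$ of $\alpha(N,\bar m_{\max})=-f_{II}(N)h'(N)/(f_{II}(N)-\bar m_{\max})$, the divergence $P^{\alpha}\bigl(h'(N)+\alpha(f_{II}(N)-m)/f_{II}(N)\bigr)$ becomes negative throughout $S$; the continuity, monotonicity and comparison estimates (items 1--3 there) transfer, and are in fact lighter because the parabola contributes a single interior critical point rather than the pair $N_{\min}<N_{\max}$. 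With no cycle present, Poincar\'e--Bendixson forces convergence to $E^*$. The boundary value $N^*=\bar N_{\max}$ is the one delicate point, since there $E^*$ is a linear center; I would settle it as in the special case closing Theorem $\ref{decreasing-nullcline}$ and in Corollary $\ref{all}$, by noting that the Dulac divergence is non-positive and vanishes only on a set of measure zero (equivalently, that the Hopf bifurcation is supercritical, as is classical for the Rosenzweig--MacArthur model), which still excludes cycles and yields global stability.

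For Case 2 the instability of $E^*$, together with the confinement of omega-limit sets to $S$, gives via Poincar\'e--Bendixson at least one periodic orbit encircling $E^*$ (the repeller $E^*$ has no stable manifold, so no interior orbit converges to it). The substance of the statement is the \emph{uniqueness} and global attractivity of this cycle, which I would obtain by applying Theorem 4.2 of \cite{kuang} exactly as in Theorem $\ref{main2}$; the decisive simplification here is that $f_{II}$ has a \emph{simple} zero at the origin ($f_{II}(0)=0$, $f_{II}'(0)=s\neq 0$), so hypothesis (H3) of \cite{kuang} holds literally and no modification of that result is needed. It then remains only to verify the sign condition $(\ref{unique})$, namely $mf_{II}'h'-f_{II}(f_{II}-m)h''\ge 0$ on $(0,K]$, which I would check by the three-interval decomposition of Theorem $\ref{main2}$, made easier by the fact that $h''(N)$ is the negative constant $-2rh/K$ (the concavity of $f_{II}$), so that Lemma $\ref{inflection}$ is not needed. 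I expect this verification --- the uniqueness of the limit cycle, historically the hard and long-delayed step in the analysis of this benchmark model --- to be the only genuine obstacle; every other ingredient is inherited, in strictly simpler form, from the analysis already carried out for $(\ref{functional})$.
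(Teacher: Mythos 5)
Your proposal cannot be matched against a proof in the paper, because the paper gives none: Theorem \ref{RM} is presented as a summary of classical results, with the analysis attributed to \cite{smith} and the delicate uniqueness step to \cite{cheng,cheng2}. What you have written is therefore a self-contained alternative to a citation, and it is a correct one; in fact it specializes the paper's machinery even more cleanly than you anticipate. The decisive identity is that, for $f=f_{II}$ and the parabolic nullcline $h(N)=\frac{r}{s}\left(1-\frac{N}{K}\right)(shN+1)$ (so $h''\equiv -2rh/K$), the numerator governing both the monotonicity of your Dulac exponent $\alpha(\cdot,m)$ and condition $(\ref{unique})$ reduces to
\begin{equation*}
mf_{II}'h'-f_{II}(f_{II}-m)h''=\frac{2rhs}{K(shN+1)^2}\,q(N),
\qquad q(N)=s(1-mh)N^2-2mN+m{\bar N}_{\max},
\end{equation*}
an upward parabola in $N$ whose minimum is attained exactly at $N=N^*(m)$ and equals $m\bigl({\bar N}_{\max}-N^*(m)\bigr)$. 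Taking $m={\bar m}_{\max}$ gives a double zero at ${\bar N}_{\max}$ and positivity elsewhere --- this is precisely your items 1--3 of Case 1, with the quadratic $q$ replacing the quartic $(\ref{quartic})$, and it also covers the boundary case $N^*={\bar N}_{\max}$, since the Dulac divergence is then strictly negative off the line $N={\bar N}_{\max}$ and Bendixson--Dulac still excludes cycles. Taking $m=m^*$ settles Case 2: condition $(\ref{unique})$ holds on all of $(0,K]$ exactly because $q(N)\geq q(N^*)=m^*\bigl({\bar N}_{\max}-N^*\bigr)>0$ when $N^*<{\bar N}_{\max}$, so the step you flag as the only genuine obstacle is a two-line computation here, and your observations that Lemma \ref{inflection} is unnecessary (as $f_{II}''<0$ everywhere) and that hypothesis (H3) of \cite{kuang} holds literally (since $f_{II}'(0)=s\neq 0$) are both correct. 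Your remaining ingredients --- dissipativity, Butler--McGehee confinement, and the repeller property of $E^*$ in Case 2 forcing a periodic orbit via Poincar\'e--Bendixson --- are standard and sound. The trade-off between the two routes is the expected one: the paper's citation is economical and historically apt (the uniqueness of the cycle genuinely resisted proof for decades), while your argument makes the result self-contained and exhibits the Rosenzweig--MacArthur dichotomy as the degenerate, strictly simpler instance of the machinery the authors built for the functional response $(\ref{functional})$.
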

Comparing this to Corollary $\ref{all}$, we see that the global behavior of the Rosenzweig-MacArthur model exhibits the same dichotomy as the model investigated in this paper: Either the coexistence steady state is globally stable; or it is not, and then a unique, globally stable limit cycle exists. However, a significant difference is that, depending on the location of $N^*$ -the predator's break-even density of prey- there is only a single threshold ${\bar N}_{\max}$ for $N^*$ in the Rosenzweig-MacArthur model that separates 
the two distinct dynamical regimes, and the coexistence steady state is globally stable if and only if ${\bar N}_{\max}\leq N^*$. 
In the model presented here, there are two thresholds $N_{\min}$ and $N_{\max}$ for $N^*$, and the globally stable coexistence steady state occurs when 
$N^*\leq N_{\min}$, or when $N_{\max} \leq N^*$ according to Corollary $\ref{all}$. In other words, here the coexistence steady state is globally stable for all sufficiently large, but also for all sufficiently small values of the predator's break-even density of prey $N^*$, whereas in the Rosenzweig MacArthur model this only happens for all sufficiently large values of $N^*$.

We shall see in a moment that this phenomenon also has important implications in the context of the paradox of enrichment, first pointed out for the Rosenzweig-MacArthur model in \cite{paradox}. Before proceeding to that discussion, we investigate how $N_{\min}$ and $N_{\max}$ in the model studied here, vary with the parameters $K$ and $g$. Recall 
that $N_{\min}$ and $N_{\max}$ are critical points for the function $h(N)$, and thus $h'(N_{\min})=h'(N_{\max})=0$, where $h'(N)$ is given in $(\ref{h1})$. 
\begin{enumerate}
\item {\bf Dependence on $K$}: Fixing all model parameters, except for $K$, and 
assuming that $N_{\min}(K)<N_{\max}(K)$, it follows from implicit differentiation with respect to $K$ of the respective expressions $h'(N_{\min}(K))=0$ and $h'(N_{\max}(K))=0$, 
and using that $h''(N_{\min}(K))>0$ and $h''(N_{\max}(K))<0$, that:
$$
\frac{dN_{\min}}{dK}(K)<0,\textrm{ and }\frac{dN_{\max}}{dK}(K)>0.
$$
Moreover, taking limits for $K\to +\infty$ in $h'(N_{\min}(K))=0$, and in the inequality $N_i=(Kg/ah)^{1/3}<N_{\max}(K)$ -see $(\ref{infl})$- we obtain that:
\begin{equation}\label{K-dep}
\lim_{K\to +\infty} N_{\min}(K)=\left(\frac{g}{ah}\right)^{1/2}=:{\bar N}_{\min},\textrm{ and }\lim_{K\to +\infty} N_{\max}(K)=+\infty .
\end{equation}
These results capture what happens when the prey's carrying capacity $K$ is increased: the gap between the two critical points of the prey nullcline widens, and while  
$N_{\max}(K)$ grows unbounded, $N_{\min}(K)$ is bounded below and converges to a positive value ${\bar N}_{\min}$.
\item {\bf Dependence on $g$}: Fixing all model parameters except for $g$, and assuming that $N_{\min}(g)<N_{\max}(g)$, 
implicit differentiation with respect to $g$ yields in a similar fashion that:
$$
\frac{dN_{\min}}{dg}(g)>0,\textrm{ and }\frac{dN_{\max}}{dg}(g)<0.
$$
Moreover, taking limits for $g\to 0+$ in the inequality $N_{\min}(g)<N_i=(Kg/ah)^{1/3}$ -see $(\ref{infl})$-,  
and in $h'(N_{\max}(g))=0$, we obtain that:
\begin{equation}\label{g-dep}
\lim_{g\to 0+} N_{\min}(g)=0,\textrm{ and }\lim_{g\to 0+} N_{\max}(g)=(K-1/(ah))/2=:{\bar N}_{\max}. 
\end{equation}
In other words, the gap between the critical points of the prey nullcline also grows when $g$ is decreased.
In this case, $N_{\min}(g)$ converges to zero, but $N_{\max}(g)$ is bounded above, and converges to an upper bound ${\bar N}_{\max}$. Note that this bound  is the same as the single threshold defined in $(\ref{max})$ for the Rosenzweig-MacArthur model (when we set $a=s$).
\end{enumerate}
{\bf Paradox of enrichment (or lack thereof)}\\
\noindent
To see why these properties are important in the context of the paradox of enrichment, we first review this paradox for the Rosenzweig-MacArthur model. 
Suppose  that initially, the system parameters are such that $(K-1/(sh))/2={\bar N}_{\max}(K)\leq N^*<K$. By Theorem $\ref{RM}$, the coexistence steady state $E^*$ is globally stable. If all model parameters remain fixed, except for $K$, and if we assume that $K$ is increased to a new value $K_{new}>K$, such that 
$N^*<{\bar N}_{\max}(K_{new})$, then the coexistence steady state is destabilized. The paradox of enrichment is precisely this destabilization phenomenon, 
illustrated in Figure $\ref{par-RM}$.

\begin{figure}
    \centering
    \subfloat[${\bar N}_{\max}(K)<N^*$]{{\includegraphics[width=6.5cm]{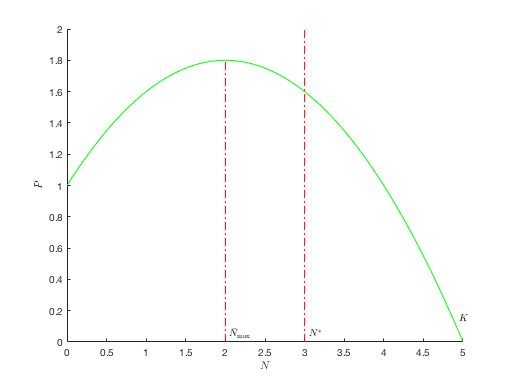}}}
    \qquad
    \subfloat[$N^*<{\bar N}_{\max}(K_{new})$]{{\includegraphics[width=6.5cm]{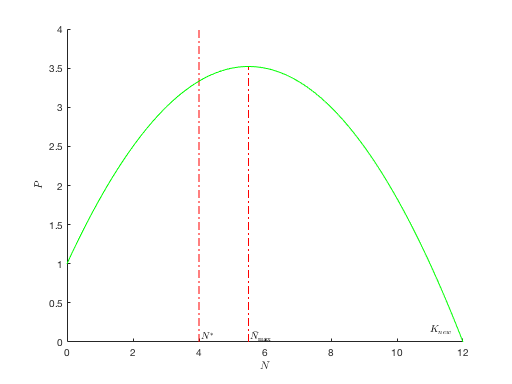}}}
    \caption{Paradox of enrichment in the Rosenzweig-MacArthur model with $r=s=h=1$: (a) $E^*$ is globally stable ($K=5$). (b) $E^*$ is unstable and there is a unique globally stable limit cycle ($K_{new}=12$).}
    \label{par-RM}
\end{figure}

\begin{figure}
    \centering
    \subfloat[$N_{\max}(K)<N^*$]{{\includegraphics[width=6.5cm]{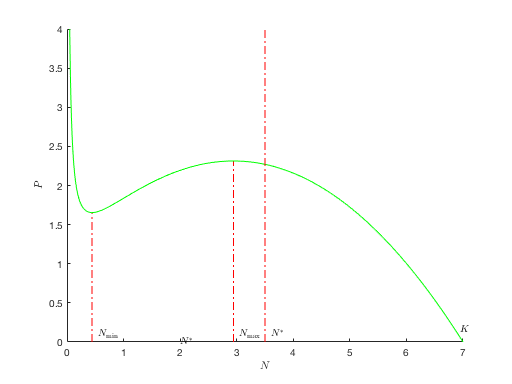}}}
    \qquad
    \subfloat[$N^*<N_{\max}(K_{new})$]{{\includegraphics[width=6.5cm]{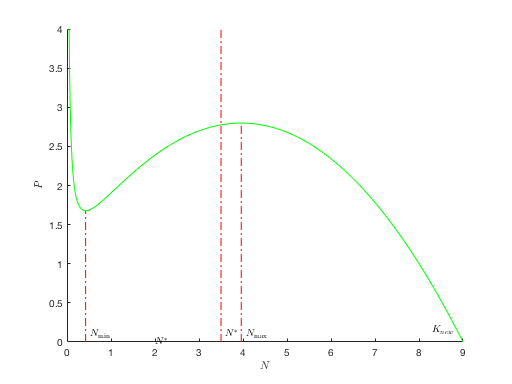}}}
    \caption{Paradox of enrichment in  model $(\ref{s1})-(\ref{s2})$ with $(\ref{functional})$ with parameters $r=a=h=1$, $g=1/7$ and $N^*=3.5$: (a) $E^*$ is globally stable ($K=7$). (b) $E^*$ is unstable and there is a unique globally stable limit cycle ($K_{new}=12$).}
    \label{par-our}
\end{figure}

\begin{figure}
    \centering
    \subfloat[$N^*<{\bar N}_{\min}<N_{\min}(K)$]{{\includegraphics[width=6.5cm]{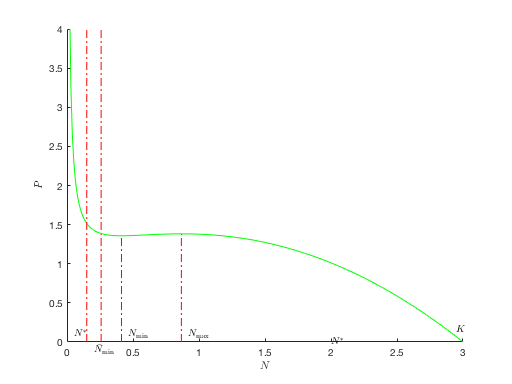}}}
    \qquad
    \subfloat[$N^*<{\bar N}_{\min}<N_{\min}(K_{new})$]{{\includegraphics[width=6.5cm]{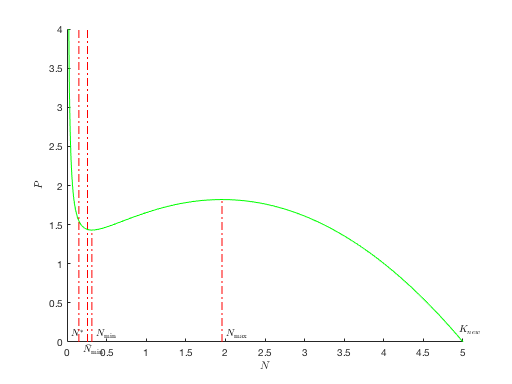}}}
    \caption{No paradox of enrichment in  model $(\ref{s1})-(\ref{s2})$ with $(\ref{functional})$ with parameters $r=a=h=1$, $g=1/15$ and $N^*=0.15$: (a) $E^*$ is globally stable when $K=3$. (b) $E^*$ is globally stable for all $K_{new}>K$ (depicted is $K_{new}=5$).}
    \label{par-our2}
\end{figure}

Let us now investigate whether the paradox of enrichment also occurs for the model presented in this paper. According to Corollary $\ref{all}$, there are two distinct possible initial scenarios that correspond to having a system with a globally stable coexistence steady state: Either $N_{\max}(K)\leq N^*<K$, or $0<N^*\leq N_{\min}(K)$. 
In both cases we shall determine what happens when all model parameters -except for $K$- remain fixed, and when $K$ increases to a new value $K_{new}>K$. 
If initially $N_{\max}(K)\leq N^*<K$, then by $(\ref{K-dep})$ there exist sufficiently large $K_{new}>K$ such that $N_{\min}(K_{new})<N^*<N_{\max}(K_{new})$, which destabilizes the coexistence steady state $E^*$, as illustrated in Figure $\ref{par-our}$. 
Similarly, if initially $0<N^*\leq N_{\min}(K)$, and if also ${\bar N}_{\min}<N^*$, then there exist sufficiently large $K_{new}>K$, such that 
$N_{\min}(K_{new})<N^*<N_{\max}(K_{new})$, once again destabilizing the coexistence steady state $E^*$. However, if initially $0<N^*\leq N_{\min}(K)$, and $N^*\leq {\bar N}_{\min}$, then there are no $K_{new}>K$ that can destabilize $E^*$, as illustrated in Figure $\ref{par-our2}$. . This follows from $(\ref{K-dep})$ because $N^*\leq {\bar N}_{\min}<N_{\min}(K_{new})$, for all $K_{new}>K$. In other words, 
in this last case, the paradox of enrichment does not occur for the model studied here, which is a striking difference with the Rosenzweig-MacArthur model, where the paradox of enrichment always occurs. The role of ${\bar N}_{\min}$, defined in $(\ref{K-dep})$, is that it serves as a buffer: When initially $N^*\leq {\bar N}_{\min}$, the system cannot be destabilized by any enrichment event in the prey's carrying capacity. \\[2ex]
\noindent
{\bf Stabilizing effect when predators decrease their search rate}\\
\noindent
We shall now discuss an important feature of the model studied here that is absent from the Rosenzweig-MacArthur model. Suppose that the system parameters are initially such that the coexistence steady state is unstable, and that a unique globally stable limit cycle exists. This may be the result of an 
enrichment event for the prey's carrying capacity $K$ as described above. Our goal is to show that the predator can respond to this 
by modifying its behavior in a way that stabilizes the coexistence steady state. To achieve this, the predator should simply increase the value of $g$. Recall that this corresponds to a decrease in its  search rate $s(N)$ in $(\ref{search})$, for every $N>0$. To see why this happens, assume that all parameters except for $g$ are fixed, and that $g$ will be increased 
to $g_{new}>g$. 

\begin{figure}
    \centering
    \subfloat[$N_{\min}(g)<N^*<N_{\max}(g)$]{{\includegraphics[width=6.5cm]{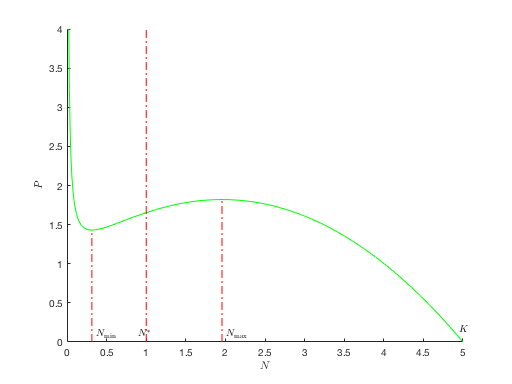}}}
    \qquad
    \subfloat[$h(N)$ is decreasing for $g_{new}$]{{\includegraphics[width=6.5cm]{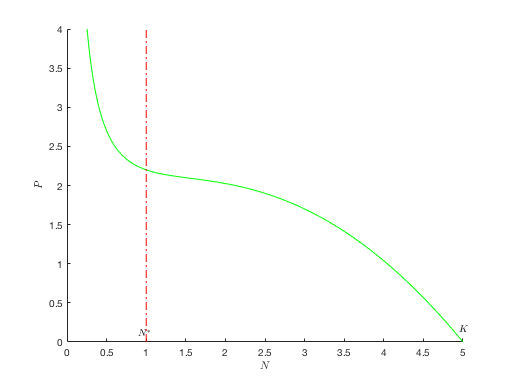}}}
    \caption{Decreased search rate (or increased $g$) stabilizes model $(\ref{s1})-(\ref{s2})$ with $(\ref{functional})$ with parameters $r=a=h=1$, $K=5$, and $N^*=1$: (a) $E^*$ is unstable when $g=1/15$. (b) $E^*$ is globally stable for $g_{new}=0.75$.}
    \label{search-our}
\end{figure}

Thus, we 
assume that initially $N_{\min}(g)<N^*<N_{\max}(g)$, implying that $E^*$ is unstable and that the system has a unique globally stable limit cycle by Theorem $\ref{main2}$. If $g_{new}$ is chosen sufficiently large, then we can ensure that $h'(N)<0$  for all $N$ in $(0,K]$, effectively making the prey nullcline decreasing in $N$, as illustrated in Figure $\ref{search-our}$. 
It follows from Theorem $\ref{main1}$, that in this case $E^*$ is globally stable, which establishes our claim. We can get a better idea of how quickly this happens by considering $(\ref{g-dep})$. By increasing $g$, the gap between $N_{\min}(g)$ and $N_{\max}(g)$ shrinks, and both move towards $N^*$. Global stability of $E^*$ will occur for the first time, when either $N_{\min}(g)$ or 
$N_{\max}(g)$ collides with $N^*$ (by Corollary $\ref{all}$).\\[2ex]
\noindent
{\bf Destabilizing effect (or lack thereof) when predators increase their search rate}\\
To conclude we will demonstrate how an increased predator's search rate $s(N)$, realized by decreasing the parameter $g$, may 
destabilize a globally stable coexistence steady state in certain cases, but not in all cases in the model investigated in this paper. The mechanism turns out to be similar to how the paradox of enrichment following an enrichment event in the prey's carrying capacity can sometimes be avoided, as described above. Suppose that initially, $g$ 
is such that $E^*$ is globally stable. According to  Corollary $\ref{all}$, this means that either  $0<N^*\leq N_{\min}(g)$, or $N_{\max}(g)\leq N^*<K$. 
If $0<N^*\leq N_{\min}(g)$, it follows from $(\ref{g-dep})$, there exist sufficiently small $g_{new}$ such that $N_{\min}(g_{new})<N^*<N_{\max}(g_{new})$, effectively destabilizing $E^*$. 
If $N_{\max}(g)\leq N^*<K$, and if also $N^*<{\bar N}_{\max} $, then there exist sufficiently small $g_{new}$ such that 
$N_{\min}(g_{new})<N^*<N_{\max}(g_{new})<{\bar N}_{\max}$, which again destabilizes $E^*$.
But if $N_{\max}(g)\leq N^*<K$, and if also ${\bar N}_{\max}\leq N^* $, then no matter how small $g_{new}$ is chosen, $(\ref{g-dep})$ implies that 
$N_{\max}(g_{new})<{\bar N}_{\max}\leq N^*$, and then $E^*$ remains globally stable. Thus, whenever ${\bar N}_{\max}\leq N^*$, there are no limits to 
increases in the predator's search rate $s(N)$ that can destabilize the system. The bound ${\bar N}_{\max}$ in $(\ref{g-dep})$ also serves as a buffer for the 
predator's break-even prey density $N^*$, in the sense that if $N^*$ is larger than ${\bar N}_{\max}$, destabilization cannot occur following an increase in the predator's search rate. 

As a final comment, we point out that ${\bar N}_{\max}$ corresponds to the prey density where 
the parabola of the prey nullcline in the Rosenzweig-MacArthur model achieves its maximum (when setting $a=s$). This is not surprising, because taking $g\to 0$ in our model, yields the Rosenzweig-MacArthur model, and when $N^*$ is to the right of this maximum, Theorem $\ref{RM}$ implies that $E^*$ is globally stable.
\section{Conclusions}
Rosenzweig-MacArthur's predator-prey model employs a Holling type II functional response which is predicated on the assumption that the 
predator's search rate is constant, and independent of the prey density. It seems plausible however that predators can modify their search rate, and instead adapt it based on the prey's density. The goal of this paper was to examine the implications on the model behavior when replacing the constant search rate in the Rosenzweig-MacArthur model by a density-dependent search rate $s(N)=aN/(N+g)$, which effectively leads to a Holling type III functional response in the model instead. The following summarizes our findings:
\begin{enumerate}
\item We provided a complete global analysis of the dynamics of the model , showing that just like the Rosenzweig-MacArthur model, the model investigated here exhibits a dichotomy: Either the coexistence steady state is globally stable; or, it is unstable, and then a unique globally stable limit cycle exists (Theorems $\ref{main1}$, $\ref{main2}$ and Corollary $\ref{all}$).
\item Whereas there is a single threshold ${\bar N}_{\max}$ for the predator's break-even prey density $N^*$, that determines which of the two possible regimes occurs in the 
Rosenzweig-MacArthur model, the model presented here can have two thresholds, $N_{\min}<N_{\max}$. 
If the predator's break-even prey density $N^*$ is such that either $N^*\leq N_{\min}$, or if $N_{\max}\leq N^*$, then the  
model has a globally stable coexistence steady state. When $N^*$ is sandwiched between $N_{\min}$ and $N_{\max}$, there is a unique, globally stable limit cycle.
\item 
Whereas the Rosenzweig-MacArthur model always exhibits the paradox of enrichment -a destabilization phenomenon that occurs for all sufficiently strong enrichment events in the prey's carrying capacity $K$- this is not always the case for the model presented here. We identified a threshold ${\bar N}_{\min}=(g/ah)^{1/2}$, such that if $N^*\leq {\bar N}_{\min}$, the model can never be destabilized following an enrichment of the prey's carrying capacity.
\item In those cases where the model studied here, does exhibit destabilization following enrichment in the prey's carrying capacity, the predator can adapt by lowering its search rate, and then the system can always be stabilized again, provided the reduction in the predator's search rate is large enough. 
 This offers an intriguing evolutionary explanation for how predators may have evolved to respond to enrichment events experienced by the prey.
\end{enumerate}

Other mechanisms that can stabilize predator-prey dynamics have been proposed, that rely on certain hypothesized movement patterns of predators and/or prey. Discrete-time, nonlinear host-parasitoid models with aggregation of parasitoids -and where parasitoid aggregation may or may not depend on prey density-  
were investigated in \cite{may} and generalized in \cite{chesson}. A continuous-time, 2-patch predator-prey system with a diffusive predator but static prey was considered in \cite{jansen}. For a more recent review of predator-prey models that incorporate movement of predators and/or prey, as well as spatial heterogeneities in the environment, see \cite{briggs}.
Most of these models are quite complicated due to the fact that explicit decisions have to be made about how the two species move, and because there is a large number of possible scenarios to choose from in this context. Some of these choices are targeted to capture the movement patterns of predators and prey for very specific systems, which may not apply more generally. In contrast, the model presented here neglects explicit spatial effects. Consequently, no decisions on how the two species move have to be made at any stage in the modeling process. Despite the hypothesis of a well-mixed environment, our results indicate that a very simple mechanism -namely, the 
biologically reasonable assumption that predators adapt their search rate based on the perceived prey density- 
always exhibits stabilizing effects on the predator-prey dynamics.

To conclude this paper, we point out that the choice of the search rate $s(N)=aN/(N+g)$ employed here, is very specific. It would be reasonable to ask how robust our conclusions are with respect to changes in this functional $s(N)$. Further research will be needed to answer this question. To caution against unwarranted optimism, we refer to the recent intriguing results in \cite{gail}, where the dynamics of three predator-prey models with distinct functional responses was considered. All three functional responses qualitatively resembled the Holling type II functional response of the Rosenzweig-MacArthur model in the sense that $f(N)$ was assumed to be smooth, zero at $N=0$, increasing but bounded above, and concave (i.e. $f''(N)<0$ for all $N>0$). 
Based on these common features of the functional responses, 
it would be reasonable to expect that these models would  
exhibit the same, or at least similar behavior as the Rosenzweig-MacArthur model. Surprisingly, it was shown in \cite{gail} that this is not the case. One of the models could have two limit cycles, one stable and the other unstable, surrounding a stable coexistence steady state. This implies that this model is bi-stable, with one attractor being a steady state, and another being a stable limit cycle. It is therefore remarkable that the model presented here, which employs a specific example of a Holling type III functional response 
$f(N)$ that transitions from being convex to concave for increasing values of $N$, cannot exhibit more complicated behavior than the original Rosenzweig-MacArthur model.





\newpage

\end{document}